\documentclass[letterpaper, conference]{ieeeconf}
\IEEEoverridecommandlockouts
\RequirePackage[font=footnotesize]{caption}
\usepackage{subcaption}
\usepackage{xcolor}

\usepackage{amsmath,amssymb}
\usepackage{upgreek}
\newlength{\dhatheight}

\usepackage{mathtools}
\usepackage{bm}
\usepackage{yhmath}
\usepackage{mathrsfs}
\newcommand{\R}{\mathbb R}
\newcommand{\X}{\mathcal X}

\usepackage{cuted}
\newtheorem{theorem}{Theorem}[section]
\newtheorem{definition}[theorem]{Definition}

\newtheorem{lemma}[theorem]{Lemma}

\newtheorem{remark}[theorem]{Remark}

\newtheorem{proposition}[theorem]{Proposition}
\newtheorem{problem}{Problem}
\newtheorem{conjecture}{Conjecture}

\newcommand{\longthmtitle}[1]{\mbox{}{\textit{(#1):}}}

\usepackage{apptools}
\AtAppendix{\counterwithin{theorem}{section}}

\usepackage{algorithmic}
\usepackage{enumerate}

\usepackage{array}

\usepackage{url}

\definecolor{gnblue6}{RGB}{35,156,255}
\newcommand{\red}[1]{\textcolor{red}{#1}}

\newcommand{\real}{\mathbb{R}}

\renewcommand{\real}{\mathbb{R}}

\newcommand{\mc}{\mathcal}

\newcommand{\diag}{\mathrm{diag}}

\newcommand{\argmin}[2] {\mathrm{arg}\min_{#1}#2}

\DeclareSymbolFont{bbold}{U}{bbold}{m}{n}
\DeclareSymbolFontAlphabet{\mathbbold}{bbold}

\begin{document}

\title{Timescale Limits of Linear-Threshold Networks}

\author{
William Retnaraj \quad Simone Betteti \quad Alexander Davydov \quad Francesco Bullo \quad Jorge Cort\'es\thanks{\red{Add thanks.} W. Retnaraj and J. Cort\'es are with the University of California San Diego, {\tt \footnotesize \{wretnaraj,cortes\}@ucsd.edu}, S. Betteti is with the Italian Institute of Artificial Intelligence for Industry, {\tt \footnotesize simone.betteti@ai4i.it}, A. Davydov is with Rice University, {\tt \footnotesize davydov@rice.edu}, and F. Bullo is with the University of California Santa Barbara,
{\tt \footnotesize bullo@ucsb.edu}.
}}
\maketitle

\begin{abstract}
Linear-threshold networks (LTNs) capture the mesoscale behavior of interacting populations of neurons and are of particular interest to control theorists due to their dynamical richness and relative ease of analysis. 
The aim of this paper is to advance the study of global asymptotic stability in LTNs with asymmetric neural interactions and heterogeneous dissipation under the structural Lyapunov diagonal stability (LDS) condition. 
To this end, we introduce a one-parameter family of LTNs that preserves the LDS condition and has a parameter-independent equilibrium set. 
In the fast limit, this family converges to a projected dynamical system (PDS), while in the slow limit, it converges to a discontinuous hard-selector system (HSS).
Under LDS, we prove that the fast PDS limit is globally exponentially stable and that the HSS limit is globally asymptotically stable. 
This alignment suggests that the limiting systems capture essential mechanisms governing stability across the entire LTN family. 
Together with numerical evidence, these findings indicate that resolving stability at the fast and slow endpoints provides a promising and structurally grounded path toward establishing global stability for LTNs with biologically plausible recurrence and diagonal dissipation.
\end{abstract}

\section{Introduction}
Autonomous dynamical systems provide a foundational framework for modeling neural computation, capturing recurrent interactions through a combination of a \emph{dissipative component} that contracts activity and a \emph{synaptic component} that encodes nonlinear recurrent coupling.
Classical results for systems with \emph{symmetric} synaptic matrices, originating from Grossberg~\cite{CMA-GS:83} and Hopfield~\cite{HJJ:82,HJJ:84}, exploit symmetry to construct Lyapunov energy functions, ensuring dissipation and convergence to an equilibrium. 
However, symmetry imposes constraints often incompatible with biologically-realistic connectivity.

The \emph{linear-threshold network} (LTN) is a tractable model that allows for asymmetric synaptic capturing, resulting in a wide variety of dynamical behavior, including stable equilibria~\cite{EN-JC:21}, multistability, oscillations~\cite{EN-RP-JC:22}, and chaos~\cite{KM-AD-VI-CC:24}.
Stability results for the LTN remain limited and often rely on restrictive assumptions such as total \(\mc L\)-stability~\cite{EN-JC:21}. 
A more general structural condition is \emph{Lyapunov diagonal stability} (LDS)~\cite{EK-AB:00}, which has been studied in the context of various dissipative network systems~\cite{MA-CM-AP:16}.
For the closely-related Hopfield network, LDS guarantees global stability via a Lyapunov function~\cite{FM-TA:95}.
Since the appearance of~\cite{FM-TA:95} in 1995, it has been a long-standing conjecture that LDS is sufficient for the stability of firing-rate networks, of which LTN is a special case.
Existing proofs that use LDS, however, rely on scalar dissipation and do not extend to the biologically relevant case of arbitrary \emph{diagonal} dissipation. 

Recent numerical evidence~\cite{BS-RW-DA:26} suggests that LDS may suffice to ensure global stability for firing-rate networks.
In particular, we focus on two relevant and limiting cases of linear-threshold dynamics: the projected dynamical system (PDS; fast limit) and the hard-selector system (HSS; slow limit). 
Conceptually, PDS reveals the stabilizing role of componentwise dissipative dominance, while HSS reveals the stabilizing role of nonlinear switching-regime boundary orientations.
In this work, we show that both systems inherit strong dissipativity and stability properties under LDS. 
In particular, we introduce suitable Lyapunov energy functions for both fast and slow limits, and prove global stability of the respective dynamics. 
The fact that both limiting mechanisms are stable under the same diagonal condition strongly suggests that the intermediate LTNs might also be stable. 
Thus, the analysis of the limiting systems is not only analytically convenient, but it gives us conceptually fundamental building blocks toward a general stability proof.

\textbf{Contributions:} In this paper, we contribute to the analysis of global stability in asymmetric neural networks as follows:
\begin{enumerate}[(i)]
    \item we introduce a parametrized family of linear-threshold networks, named the $\tau$-LTN family, preserving Lyapunov diagonal stability for all $\tau$;
    \item we show how the limiting endpoint systems, fast and slow, of the family correspond to projected dynamics and hard-selector dynamics, respectively, and establish uniqueness of equilibria for all systems;
    \item we prove global stability of the unique equilibrium of the fast-limit and slow-limit systems, respectively, by introducing suitable Lyapunov functions.
\end{enumerate}
Section~II introduces notation and preliminaries. Section~III presents the \(\tau\)-LTN family and formulates the stability problem. Section~IV establishes equilibrium properties and preservation of diagonal stability. Sections~V and~VI analyze the fast and slow-limit systems. Section~VII is a discussion with some numerical exploration, and Section~VIII concludes with a summary of how the \(\tau\)-LTN family and the endpoints study helps us make progress towards the global stability of LTNs under Lyapunov diagonal stability, and we posit future research directions.

\section{Preliminaries}
\label{sec:SIprelim}
\subsection{Notation}
Define the set \([n]=\{1,\dots, n\}\) for \(n>0\).
Denote the set of all reals as \(\R\), the set of all positive reals as \(\R_+\), and the \(n\)-dimensional Euclidean space by \(\R^n\).
For this section, let \(z\in \R\) be a scalar and let \(x, y \in \mathbb R^n\) be vectors.
Denote \(x = [x_1,\dots, x_n]^{\top}\).
Let \([z]_0^m = \max{(0, \min{(z, m)})}\) with \(m>0\).
When applied to \(x\in \R^n\), \([x]_0^m\) is to be understood elementwise.
Define the positive (resp. negative) part of \(z\) as \((z)_+ = \max(0,\, z)\) (resp. \((z)_- = -\min(0,\, z)\)). 
Define \(\bf 1_n\) to be the \(n\)-dimensional all-ones vector.
Let \(\X\subset \R^n\).
Define \({\bf 1_\X}:\R^n \rightarrow \{0, 1\}\) be the \(0\)--\(1\) indicator of \(\X\), so that \({\bf 1_\X}(x) = 1\) if \(x \in \X\), and \(0\) if not.

Let \(\mathbb D^n_+\) be the space of all positive diagonal matrices.
For this section, let \(A \in \R^{n\times n}\) be a real-valued square matrix.
Let \(A_i \in \R^n\) represent the \(i\)th row of \(A\) so that \(A = [A_1, \dots A_n]^\top\).
Let \(A_{ij} \in \R\) represent the entry in the \(i\)th row and \(j\)th column of \(A\).
Denote \(\diag(d_1, \dots, d_n)\) as the diagonal matrix with scalars \(d_1, \dots, d_n\) on the diagonal.
We denote the identity matrix in \(\mathbb R^{n\times n}\) by \(I_n\), and drop the subscript when the dimension is clear.
By \(A\succ 0\) (resp. \(A \prec 0\)), we mean that \(A\) is positive definite (resp. negative definite).
Let \(\lVert x \rVert_P\) be the \(P\)-weighted Euclidean 2-norm of \(x\) for \(P \succ 0\).
If \(P= I\), the above becomes \(\lVert x \rVert_2\) or simply \(\lVert x \rVert\), the Euclidean 2-norm.
We say that \(A\) is Lyapunov diagonally stable (\(A \in \mc LDS\)) if \(A^\top \Lambda + \Lambda A \prec 0\) holds for some \(\Lambda \in \mathbb D^n_+\), which is called the diagonal certificate.

Let \(\mathfrak B(X)\) be the collection of all subsets of set \(X\).
Define the convexified hard-selector map \(h:\R \rightarrow \mathfrak B([0,1])\) and its vector equivalent \(\mc H: \R^n \rightarrow \mathfrak B([0,1]^n)\) as
\begin{equation}\label{eq:hard_sel_cvx}
    h(z) = 
    \begin{cases}
    \{0\}, & z<0,\\
    [0,1], & z=0,\\
    \{1\}, & z>0;
    \end{cases}
    \qquad
    \mc H(x) 
    = 
    \begin{bmatrix}
        h(x_1)\\ \vdots\\ h(x_n)
    \end{bmatrix}.
\end{equation}

\subsection{Projected dynamical systems}
Here, we describe basic notions on projected dynamical systems following~\cite{PD-AN:93,AN-DZ:96}.
Let \(\X \subset \mathbb R^n\) be a closed convex polyhedron and let \(x, v \in \mathbb R^n\) be two vectors.
If \(x \in \mathcal X\), we define the tangent cone of \(\X\) at \(x\ \in \X\) as \(T_\X (x)\).
Intutively, the tangent cone is the set of all directions that do not leave \(\X\), if one steps in their direction infinitesimally from \(x \in \X\).
Define the projection operator \(\Pi_\mathcal{X}: \mathbb R^n \times \mathbb R^n \rightrightarrows \mathbb R^n\) by
\begin{equation}\label{eq:projection_op}
    \Pi_\mathcal{X}(x, v) = \argmin{y\in T_\mathcal{X}(x)}{\lVert y - v\rVert^2}.
\end{equation}
Notice that the projection onto a closed convex cone (arising from the convexity of \(\X\)) is single-valued. 
For any vector field \(X:\mathbb R^n \rightarrow \mathbb R^n\), with slight abuse of notation, construct the following dynamical system that is a nonclassical ODE
\begin{equation}\label{eq:PDS_defn}
    \dot x = \Pi_\mathcal{X}(x, X(x)).
\end{equation}
We call~\eqref{eq:PDS_defn} a projected dynamical system (PDS).
Absolutely continuous forward solutions that satisfy~\eqref{eq:PDS_defn} a.e. in time exist and are unique for Lipschitz vector field \(X\) and convex polyhedron \(\X\)~\cite[Theorem 2]{PD-AN:93}.

\subsection{Discontinuous dynamical systems}
Here, we provide basic background on  discontinuous systems following the exposition of~\cite{JC:08} and results from~\cite{AFF:88}.
Consider the dynamics
\begin{equation}\label{eq:discontinuous_ode}
    \dot x = X(x)
\end{equation}
where \(X:\X \rightarrow \R^n\) is a (possibly discontinuous) piecewise continuous vector field and \(\X\subseteq \mathbb R^n\) is a convex polyhedron.
A Filippov solution of~\eqref{eq:discontinuous_ode} is an absolutely continuous function \(x:[0, \infty)\rightarrow \R^n\) that satisfies
\begin{equation}\label{eq:F[discontinuous_ode]}
    \dot x(t) \in \operatorname{F}[X] (x(t)), \qquad\text{for a.e.}\ t \geq 0
\end{equation}
where \(\operatorname{F}[X]:\R^n \rightarrow \mathfrak{B}(\R^n)\) is the Filippov regularization of \(X\) given by
\begin{equation}\label{eq:filippov_reg}
    \operatorname{F}[X](x) \coloneqq \bigcap_{\delta>0}\bigcap_{\mu(S)=0} \overline{\operatorname{co}}\{X(B(x, \delta)\setminus S)\}
\end{equation}
for all \(x \in \X\), where \(\overline{\operatorname{co}}\) is the convex closure of its argument, \(\mu\) is the Lebesgue measure, and \(B(x, \delta)\) represents the open ball centered at \(x\in \R^n\) with radius \(\delta>0\).
The intuition behind Filippov solutions is therefore to examine the immediate neighborhood of each point \(x\) and get a set of admissible trajectory directions.

Filippov solutions of~\eqref{eq:discontinuous_ode} exist if \(X\) is measurable and locally essentially bounded~\cite{AFF:88},~\cite[Proposition 3]{JC:08}.
Uniqueness of Filippov solutions,
however, generally fails for piecewise continuous vector fields.
That does not prevent us from studying properties of trajectories, such as the stability of an equilibrium.
We attach the adjective ``strong" to any property when we mean that all solutions starting from each initial condition satisfy it, whereas the adjective ``weak" means a solution from each initial condition satisfies it.

\section{Problem setup}
Linear-threshold networks (LTNs) model the autonomous dynamics of recurrent neural populations through a biologically motivated yet tractable structure. Let \(D \in \mathbb{D}^n_+\) denote the diagonal dissipation matrix and \(W \in \mathbb{R}^{n \times n}\) the synaptic matrix. The LTN dynamics are
\begin{equation}\label{eq:LTN}\tag{LTN}
    \dot x = -Dx + [Wx + u]_0^1,
\end{equation}
where \(x\) represents average population firing rates. Under this model, the set \(\mathcal{X} = \{x \in \mathbb{R}^n : Dx \in [0,1]^n\}\) is forward invariant (cf. Lemma~\ref{lem:forward_invariance}), ensuring nonnegative and bounded activity. The matrix \(W\) encodes interaction strengths between populations, while \(D\) governs leak dynamics; allowing \(D \neq I\) captures heterogeneous dissipation across populations. The interplay between \(W\) and \(D\) determines the overall stability of the network. Inspired by the study in~\cite{FM-TA:95}, we investigate how structural properties of \(W - D\) relate to global stability of~\eqref{eq:LTN}, focusing on \emph{Lyapunov diagonal stability} (LDS) as a candidate sufficient condition for global asymptotic stability
(GAS). This motivates the following conjecture.

\begin{conjecture}\longthmtitle{LDS implies GAS of LTN equilibria}
\label{conj:LDS-GAS}
    Consider~\eqref{eq:LTN} with \(D \in \mathbb{D}^n_+\), \(W \in \mathbb{R}^{n\times n}\).
    If \(W - D \in \mathcal{LDS}\), then for every \(u \in \mathbb{R}^n\),
   ~\eqref{eq:LTN} admits a unique globally asymptotically stable equilibrium.
\end{conjecture}
Conjecture~\ref{conj:LDS-GAS} is strongly supported by Monte-Carlo simulations in~\cite{BS-RW-DA:26}, which report a \(100\%\) GAS success rate for dimensions \(n = 3,4,5\) and for activation functions extending beyond the linear-threshold case. However, analytical proofs for \(D \neq I\) remain elusive: classical Lur’e-Postnikov Lyapunov constructions~\cite{REK:63}
do not accommodate heterogeneous dissipation well.

To uncover the mechanisms that may enforce global stability, we isolate two structural features intrinsic to LTNs: (i) the dissipativity encoded by the LDS property of \(W - D\), and (ii) the nonlinear switching induced by the saturation of \([\,\cdot\,]_0^1\). These components become analytically transparent in two limiting regimes of LTN behavior: a \emph{fast} regime dominated by projection onto the invariant set, and a \emph{slow} regime  dominated by saturated (hard-selecting) dynamics. Studying these limits enables us to untangle and analyze the two motifs independently. To formalize this approach, we construct a one-parameter family of dynamics whose extremes correspond to these limiting behaviors, while the original LTN is recovered at a specific intermediate value. This framework allows the fast and slow limits to reveal the underlying stability mechanisms that we aim to extend to general LTNs.
\begin{definition}[$\tau$-LTN family]\label{def:tLTN}
    Let \(\tau>0\) and define
    \begin{equation}\label{eq:t-LTN_field}
        x\mapsto f_\tau(x):=\frac1\tau\left(-Dx+[Dx+\tau(Ax+u)]_0^1\right),
    \end{equation}
    as the $\tau$-LTN map for \(x\ \in \X\), where \(A:=W-D\). 
    The $\tau$-LTN family of dynamics is given by
    \begin{equation}\label{eq:t-LTN}\tag{$\tau$-LTN}
         \dot x=f_\tau(x).
    \end{equation}
\end{definition}

Notice that \emph{each} system in this family is itself an LTN with synaptic matrix $W_\tau=(1-\tau)D+\tau W$, the same diagonal dissipation, and therefore the same state space $\X$, but evolving on a timescale $\tau>0$ and driven by a scaled input $\tau u$. 
As shown later in Section~\ref{sec:t-LTN}, all members of~\eqref{eq:t-LTN} share the same set of equilibria.

\begin{figure}
    \centering
    \includegraphics[width=1\linewidth]{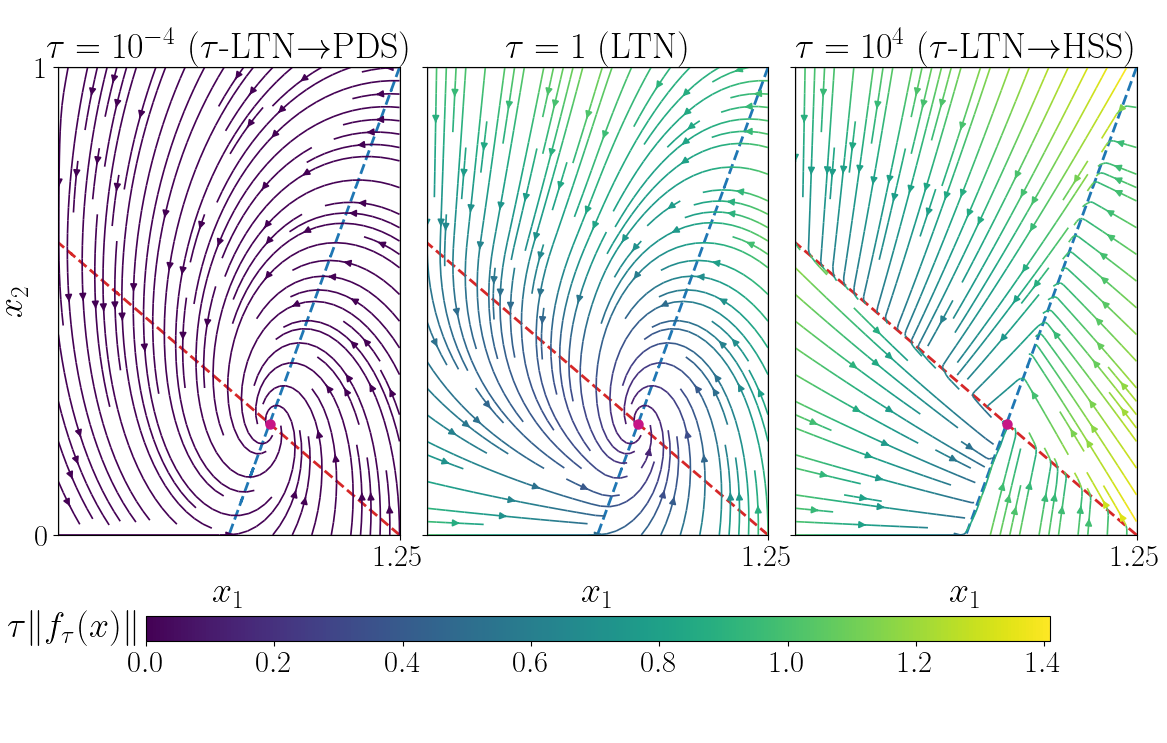}
    \caption{Vector field streamplots of the LDS-preserving family ($\tau$-LTN) of linear-threshold networks (LTNs) with \(W=[0.0,\, -1.6;\, 1.6,\, 0.0]\),
    \(D = \diag(0.8,\,1.0)\), and \(u = [1,\, -1]^\top\),
    and thus \(W-D \in \mc LDS\) with diagonal certificate \(I_2\),
    for \(\tau\in \{10^{-4},\,1,\,10^{4}\}\). The left panel approaches fast projected dynamical system (PDS) like behavior (flow almost `slams' into the boundary); the right panel approaches the slow-limit hard-selector system (HSS) like behavior (almost discontinuous change in flow direction); the middle panel is the canonical LTN. 
    Empirically, \(\tau=1\) blends qualitative vector-field features of both extremes. The violet dot denotes the common equilibrium; the red dashed line depicts the locus of \((Ax + u)_1 = 0\); the blue one of \((Ax + u)_2 = 0\). Color bar: 2-norm of the \(\tau\)-scaled vector field.}
    \label{fig:t-LTN_panels1}
\end{figure}

The \(\tau\)-LTN construction admits two limiting regimes: a fast-timescale limit at $\tau\to 0^+$ and a slow-timescale limit at $\tau\to+\infty$. The fast limit yields the projected dynamical system (PDS) $\dot{x}=\Pi_{\X}(x, Ax+u)$, while the slow limit belongs to the hard-selector system (HSS) $\dot{x}\in-Dx+\mc H(Ax+u)$, with \(\mc H\) switching according to the sign of its argument, as defined in~\eqref{eq:hard_sel_cvx}. 
Neither limiting system belongs to the family, yet both share the same equilibrium set as every $\tau$-LTN. 

Figure~\ref{fig:t-LTN_panels1} illustrates the vector fields for three representative values of $\tau$ with fixed $D\neq I$ and $A\in\mc LDS$, comparing dynamics near both asymptotic limits with the canonical LTN. 
Overall, the family~\eqref{eq:t-LTN} provides a unified framework that preserves equilibria across timescales and facilitates the stability analysis of LTNs. We are now able to pose our problem of interest, which tractably encodes Conjecture~\ref{conj:LDS-GAS}.

\begin{problem}\longthmtitle{LTN global stability via endpoint analysis}
    Consider the family of dynamical systems with \(\tau > 0\)
    \[\dot x = \frac1\tau\left(-Dx+[Dx+\tau(Ax+u)]_0^1\right)\]
    with \(u \in \R^n\), dissipation matrix \(D \in \mathbb D_+^n\) and synaptic matrix \(W \in \R^{n\times n}\).
    Also consider the limiting dynamical systems that arise when \(\tau \to 0^+\) (fast endpoint) or \(\tau \to +\infty\) (slow endpoint).
    Let \(A = W-D\).
    Then, show that if \(A \in \mc LDS\), both the fast and slow limits have the same equilibrium point that is globally stable. 
    Using the endpoint stability, under the same hypothesis, prove that each member of the family has a GAS equilibrium.
    In particular, show that the \(\tau = 1\) member, i.e.,~\eqref{eq:LTN} has a GAS equilibrium.
\end{problem}

Importantly, we remark that in this work, we do not solve the full problem as stated above, but make promising progress towards proving Conjecture~\ref{conj:LDS-GAS}. While we solve the endpoint stability problem and isolate the respective Lyapunov stability mechanisms, we leave the bridge to the global stability of the \(\tau\)-LTN family to future work.

\section{The $\tau$-LTN family: equilibrium and structural stability}\label{sec:t-LTN}
Consider the parametric family of LTNs given in Definition~\ref{def:tLTN}. 
As we will point out later, the family is highly regular; but most notably for our use, and what we establish below, is that if one member is LDS, each member is LDS.
\begin{lemma}[Preservation of LDS]
    \label{lem:LDS_preserved}
    For every \(\tau>0\), the member~\eqref{eq:t-LTN} is LDS if and only if \(A\in\mc LDS\).
\end{lemma}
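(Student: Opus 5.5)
The plan is to reduce the statement to a one-line matrix identity, after first fixing what ``the member is LDS'' means. Following the convention of Conjecture~\ref{conj:LDS-GAS} and the remark after Definition~\ref{def:tLTN}, an LTN with synaptic matrix \(W'\) and dissipation \(D'\) is called LDS when \(W'-D'\in\mc LDS\).

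First I rewrite the \(\tau\)-LTN map~\eqref{eq:t-LTN_field} in canonical LTN form. Since \(Dx+\tau(Ax+u) = \big((1-\tau)D+\tau W\big)x + \tau u\), the member~\eqref{eq:t-LTN} has the following data:
\begin{itemize}
\item synaptic matrix \(W_\tau=(1-\tau)D+\tau W\);
\item dissipation matrix \(D\), unchanged;
\item input \(\tau u\);
\item an overall time rescaling by \(1/\tau\).
\end{itemize}
The time rescaling and the input do not enter the structural LDS condition, which involves only the synaptic and dissipation matrices. The relevant matrix is therefore
\[
W_\tau - D = (1-\tau)D+\tau W - D = \tau(W-D) = \tau A .
\]

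Second, I check that LDS is invariant under positive scaling. If \(\Lambda\in\mathbb D^n_+\) certifies \(A\), so that \(A^\top\Lambda+\Lambda A\prec 0\), then
\[
(\tau A)^\top\Lambda+\Lambda(\tau A)=\tau\,(A^\top\Lambda+\Lambda A)\prec 0
\]
for every \(\tau>0\). Conversely, if \(\Lambda\) certifies \(\tau A\), then dividing by \(\tau>0\) shows that the same \(\Lambda\) certifies \(A\). Hence \(\tau A\in\mc LDS\) if and only if \(A\in\mc LDS\), and the diagonal certificate is the same for every \(\tau\). This last point is worth stating explicitly, because it is what later lets one use a single \(\Lambda\) uniformly across the family and in both limits.

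The only real obstacle is interpretive, namely being clear that ``the member is LDS'' refers to \(W_\tau-D\) with the member's own dissipation \(D\). Once that is fixed, the argument is the identity \(W_\tau-D=\tau A\) together with the scaling invariance of negative definiteness.
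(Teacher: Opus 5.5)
Your proposal is correct and follows essentially the same route as the paper: identify the member's synaptic matrix $W_\tau=(1-\tau)D+\tau W$ with unchanged dissipation $D$, so that $W_\tau-D=\tau A$, and then use $(\tau A)^\top\Lambda+\Lambda(\tau A)=\tau(A^\top\Lambda+\Lambda A)$ with $\tau>0$. Your explicit treatment of both directions and your remark that the same certificate $\Lambda$ works for every $\tau$ just spell out what the paper's one-line argument leaves implicit.
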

\begin{proof}
The effective synaptic matrix of~\eqref{eq:t-LTN} is \(W_\tau=(1-\tau)D+\tau W\),
so \(W_\tau-D=\tau(W-D)=\tau A\), while the dissipation matrix remains the same.
Thus,
\[(W_\tau-D)^\top\Lambda+\Lambda(W_\tau-D) =
\tau(A^\top\Lambda+\Lambda A),\]
and thus for each \(\tau >0\) the diagonal Lyapunov inequality holds for \(W_\tau-D\) exactly when it holds for \(A\).
\end{proof}

Under $A=W-D\in\mc LDS$, we next prove that the \(\tau=1\) member, i.e.,~\eqref{eq:LTN}, admits a unique equilibrium point in~$\X$. 
\begin{proposition}\longthmtitle{Unique LTN equilibrium under LDS}\label{prop:LTN-eq}
    Define $x\mapsto f_1(x)=-Dx+[Wx+u]_0^1$, for fixed external input $u\in\R^n$, and suppose that $W-D\in\mc LDS$. 
    Then there exists a unique $x^\star\in\real^n$ such that $f_1(x^\star)=0$.
    In particular, for each \(u\), under the LDS assumption,~\eqref{eq:LTN} admits a unique equilibrium.
\end{proposition}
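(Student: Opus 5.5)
Existence and uniqueness will follow from a change of variables that turns the equilibrium equation into a piecewise-linear equation, whose solution is unique by a strong-monotonicity argument weighted by the diagonal certificate. Set $y := Wx+u$. Then $f_1(x)=0$ is equivalent to $Dx = [y]_0^1$, i.e.\ $x = D^{-1}[y]_0^1$. Substituting gives the equivalent fixed-point equation in $y$:
\[
y = W D^{-1}[y]_0^1 + u .
\]
Every equilibrium $x^\star$ yields a solution $y^\star = Wx^\star+u$. Conversely, every solution $y^\star$ yields the equilibrium $x^\star = D^{-1}[y^\star]_0^1$. This correspondence is bijective: given $x$, the value $y$ is determined, and given $y$, the value $x$ is determined. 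Note that we never need $W$ to be invertible.

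\textbf{Existence.} The map $y\mapsto WD^{-1}[y]_0^1+u$ is continuous, and its image lies in the compact box $K := WD^{-1}[0,1]^n + u$. It therefore maps a large closed ball containing $K$ into itself, and Brouwer's fixed point theorem gives a solution. Equivalently, one can work directly in $x$: the map $x\mapsto D^{-1}[Wx+u]_0^1$ sends the compact convex set $\X$ into itself.

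\textbf{Uniqueness.} This is the main step, and it is where LDS is used. Let $x,\bar x$ be two equilibria and set $e := x-\bar x$. Since $Dx=[Wx+u]_0^1$ and $D\bar x = [W\bar x+u]_0^1$, and the scalar saturation is monotone and $1$-Lipschitz, there is a diagonal $\Theta=\diag(\theta_i)$ with $\theta_i\in[0,1]$ such that
\[
De = \Theta W e .
\]
Each $\theta_i$ is the difference quotient of the saturation in coordinate $i$; set $\theta_i=0$ when the argument difference is zero, in which case $(De)_i=0$ anyway. We must show $e=0$.

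Write $S=\{i:\theta_i>0\}$. For $i\notin S$ we have $d_ie_i=0$, so $e_i=0$. For $i\in S$ we have $(We)_i = \theta_i^{-1} d_i e_i$, and hence $(Ae)_i = (\theta_i^{-1}-1)d_ie_i$. Now use the certificate $\Lambda$. Because $e$ is supported on $S$,
\[
e^\top\Lambda A e=\sum_{i\in S}\lambda_i(\theta_i^{-1}-1)d_i e_i^2 \ge 0 ,
\]
since $\theta_i\le 1$. On the other hand, $e^\top\Lambda Ae=\tfrac12 e^\top(A^\top\Lambda+\Lambda A)e<0$ whenever $e\neq 0$. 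This contradiction forces $e=0$, so the equilibrium is unique.

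The delicate point is exactly this reduction of the coupled saturations to a single diagonal $\Theta\in[0,1]^n$ and the observation that the sign of $\theta_i^{-1}-1$ is favorable. I expect this to go through cleanly. The final claim of the proposition, uniqueness of the equilibrium of~\eqref{eq:LTN} for each $u$, is then immediate.
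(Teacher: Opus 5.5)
Your proof is correct and follows the same route as the paper: Brouwer on $\X$ (or, equivalently, in $y$-space) for existence, and a diagonal matrix of saturation difference quotients that reduces uniqueness to $(D-\Theta W)e=0$. The one difference is the nonsingularity of $D-\Theta W$: the paper cites the lemma of Forti and Tesi for it, whereas your computation $e^\top\Lambda A e=\sum_{i\in S}\lambda_i(\theta_i^{-1}-1)d_ie_i^2\ge 0$, contradicting $e^\top\Lambda Ae<0$, proves it directly from the LDS certificate, which makes the argument self-contained.
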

\begin{proof}
Under the LDS assumption, by~\cite[Lemma 2]{FM-TA:95}, for every diagonal \(K\) such that \(0\preceq K\preceq I_n\), the matrix \(D-KW\) is nonsingular.
Define \(T(x):=D^{-1}[Wx+u]_0^1\). Then \(f_1(x)=0\) if and only if \(x=T(x)\). 
Since \([Wx+u]_0^1\in[0,1]^n\) for all \(x\in\real^n\), the map \(T\) sends the compact convex set \(\X\) into itself. 
Hence, by Brouwer's fixed-point theorem~\cite{brouwer1911abbildung}, \(T\) has a fixed point \(x^\star\in \X\), and therefore \(f_1(x^\star)=0\).
For uniqueness, let \(x,y\in\real^n\) satisfy \(f_1(x)=f_1(y)=0\). 
Then \(Dx=[Wx+u]_0^1\) and \(Dy=[Wy+u]_0^1\). For each \(i\in\{1,\dots,n\}\), set \(a_i:=(Wx+u)_i\), \(b_i:=(Wy+u)_i\), and define \(k_i:=([a_i]_0^1-[b_i]_0^1)/(a_i-b_i)\) when \(a_i\neq b_i\), and \(k_i:=0\) otherwise. 
Since \(z\mapsto [z]_0^1\) is nondecreasing and \(1\)-Lipschitz, \(k_i\in[0,1]\) for all \(i\). Let \(K:=\diag(k_1,\dots,k_n)\). 
Then \([Wx+u]_0^1-[Wy+u]_0^1=KW(x-y)\), so \((D-KW)(x-y)=0\). 
Recall that \(D-KW\) is nonsingular; hence \(x=y\).
Therefore, there exists a unique \(x^\star\in\real^n\) such that \(f_1(x^\star)=0\).
\end{proof}

We show that all members of the family have the same equilibria, and extend the above result to the whole family.
\begin{proposition}\longthmtitle{Common equilibrium set}
\label{prop:common_equilibria}
Consider the family~\eqref{eq:t-LTN} with \(D \in \mathbb D_+^n\), \(W \in \R^{n\times n}\).
Then,
\begin{enumerate}[(i)]
    \item the equilibria of the family are independent of \(\tau>0\).
    \item if \(A = W - D \in \mc LDS\), the family possesses a unique equilibrium.
\end{enumerate}
\end{proposition}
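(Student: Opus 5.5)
Part (i) reduces to a pointwise characterization of the zeros of $f_\tau$ that does not involve $\tau$. Fix $\tau>0$ and $x\in\real^n$. Since $\tau>0$, we have $f_\tau(x)=0$ if and only if $Dx=[Dx+\tau(Ax+u)]_0^1$, and this is a componentwise condition. For each $i$, set $z_i:=(Dx)_i$ and $g_i:=(Ax+u)_i$; the condition reads $z_i=[z_i+\tau g_i]_0^1$. I would show by case analysis that this holds exactly when one of the following is true:
\begin{itemize}
\item $z_i\in(0,1)$ and $g_i=0$;
\item $z_i=0$ and $g_i\le 0$;
\item $z_i=1$ and $g_i\ge 0$.
\end{itemize}

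The case analysis runs as follows.
\begin{itemize}
\item If $z_i+\tau g_i\in[0,1]$, the clip is the identity, so the condition gives $\tau g_i=0$, hence $g_i=0$ and $z_i\in[0,1]$.
\item If $z_i+\tau g_i<0$, the clip returns $0$, so $z_i=0$. Then $\tau g_i<0$, hence $g_i<0$.
\item If $z_i+\tau g_i>1$, the clip returns $1$, so $z_i=1$. Then $g_i>0$.
\end{itemize}
The converse directions are immediate by substitution. The resulting characterization is the complementarity (variational-inequality) condition $-(Ax+u)\in N_\X(x)$ with respect to the box $\X=\{x: Dx\in[0,1]^n\}$, and it contains no $\tau$. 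Hence the zero set of $f_\tau$ is the same for every $\tau>0$. As a by-product, any equilibrium automatically lies in $\X$.

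For part (ii), I would take $\tau=1$. Then $Dx+(Ax+u)=Wx+u$, so $f_1$ in Definition~\ref{def:tLTN} coincides with the map $f_1(x)=-Dx+[Wx+u]_0^1$ of Proposition~\ref{prop:LTN-eq}. Under $A\in\mc LDS$, that proposition gives a unique zero $x^\star$. By part (i), $x^\star$ is then the unique equilibrium of every member of the family. An alternative route is to apply Proposition~\ref{prop:LTN-eq} directly to each member, viewed as an LTN with synaptic matrix $W_\tau$. Lemma~\ref{lem:LDS_preserved} shows $W_\tau-D=\tau A\in\mc LDS$, and the input is $\tau u$. Since Proposition~\ref{prop:LTN-eq} holds for every input, each member has a unique equilibrium, and part (i) shows these equilibria all coincide.

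The only delicate step is the case analysis in (i), and specifically the boundary cases where $z_i+\tau g_i$ lands exactly at $0$ or $1$. These fall under the first case, which is consistent with the characterization because they force $g_i=0$. Everything else follows directly from earlier results.
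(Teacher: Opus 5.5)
Your proposal is correct and follows the paper's proof essentially step for step. For (i) you use the same $\tau$-free componentwise characterization of $Dx=[Dx+\tau(Ax+u)]_0^1$, and for (ii) the same reduction to Proposition~\ref{prop:LTN-eq} at $\tau=1$ combined with (i); working on $\R^n$ instead of $\X$ costs nothing. One harmless slip in an aside: the complementarity condition you derive is $Ax+u\in N_{\X}(x)$ (for instance, $d_ix_i=0$ forces $(Ax+u)_i\le 0$), not $-(Ax+u)\in N_{\X}(x)$, but nothing in your argument depends on it.
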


\begin{proof}
Let \(x \in \X\). 
(i) The equilibrium condition \(f_\tau(x)=0\) is \(Dx=[Dx+\tau(Ax+u)]_0^1\).
Using the definition of the saturation function for each case,  we get for \(i \in [n]\),
\begin{align}\label{eq:eq_cond_LTN}
    \begin{cases}
    (Ax+u)_i \le 0, & d_ix_i=0,\\
    (Ax+u)_i = 0, & 0<d_i x_i<1,\\
    (Ax+u)_i \ge 0, & d_i x_i=1.
    \end{cases}    
\end{align}
Observe that~\eqref{eq:eq_cond_LTN} does not depend on \(\tau\), which proves (i).
(ii) Assume \(A \in \mc LDS\). 
By Proposition~\ref{prop:LTN-eq}, the \(\tau=1\) member admits a unique equilibrium in \(\X\).
By (i), each member therefore also has a unique equilibrium in \(\X\).
\end{proof}

\section{Fast limit: projected dynamical system}
Here we study the global stability properties of the fast limit of the $\tau$-LTN family.
Define the projected dynamical system (PDS) with \(x \in \X\) as
\begin{equation}\label{eq:PDS}\tag{PDS}
\dot x=\Pi_{\X}(x,Ax+u).
\end{equation}
Because \(\X\) is closed and convex, and \(Ax + u\) is Lipschitz in \(x\), for a given initial condition,~\eqref{eq:PDS} admits a unique and absolutely continuous trajectory satisfying~\eqref{eq:PDS} a.e. in time~\cite{AN-DZ:96}.

\begin{lemma}[Explicit form of the projection on \(\X\)]
\label{lem:projection_formula}
For every \(x\in\X\), \(v\in\R^n\), and each \(i \in [n]\),
\begin{equation}\label{eq:project_explicit}
    \Pi_{\X}(x,v)_i=
    \begin{cases}
    (v_i)_+, & d_ix_i=0,\\
    v_i, & 0<d_i x_i<1,\\
    -(v_i)_-, & d_i x_i=1.
    \end{cases}
\end{equation}
\end{lemma}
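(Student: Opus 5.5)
The plan is to exploit the product structure of \(\X\). Since \(D\in\mathbb D^n_+\), we have
\[
\X=\prod_{i=1}^n [0,1/d_i],
\]
a box in \(\R^n\). I will first compute the tangent cone \(T_\X(x)\) at \(x\in\X\). For a product of closed convex sets, the tangent cone is the product of the factor tangent cones. For a single interval \([0,1/d_i]\) at a point \(x_i\), the tangent cone is:
\begin{itemize}
\item \([0,\infty)\) if \(d_ix_i=0\);
\item \(\R\) if \(0<d_ix_i<1\);
\item \((-\infty,0]\) if \(d_ix_i=1\).
\end{itemize}
This follows directly from the definition: a direction \(y_i\) is admissible exactly when \(x_i+\epsilon y_i\in[0,1/d_i]\) for all sufficiently small \(\epsilon>0\), and the closure of the cone of feasible directions changes nothing here. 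Hence \(T_\X(x)=\prod_i C_i(x)\), with each \(C_i(x)\) one of the three closed convex cones above.

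Next I will use that the objective in~\eqref{eq:projection_op} separates across coordinates:
\[
\lVert y-v\rVert^2=\sum_i (y_i-v_i)^2 .
\]
Since the feasible set is a Cartesian product, the minimization splits into \(n\) independent scalar problems, \(\min_{y_i\in C_i(x)}(y_i-v_i)^2\). Each one is the Euclidean projection of \(v_i\) onto an interval, so it has the following solution:
\begin{itemize}
\item on \([0,\infty)\) the minimizer is \(\max(0,v_i)=(v_i)_+\);
\item on \(\R\) it is \(v_i\);
\item on \((-\infty,0]\) it is \(\min(0,v_i)=-(v_i)_-\).
\end{itemize}
Uniqueness of each scalar minimizer, by strict convexity, gives single-valuedness of \(\Pi_\X(x,v)\) and yields exactly~\eqref{eq:project_explicit}.

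The only step needing some care is the identification of the tangent cone of the box as the product of the interval tangent cones. I would justify this directly rather than cite a general product rule. A polyhedron's tangent cone at \(x\) is \(\{y: a_j^\top y\le 0 \text{ for active constraints } j\}\). The constraints of \(\X\) are \(-x_i\le 0\) and \(d_ix_i\le 1\), and each involves only one coordinate. So the active set decouples coordinatewise, giving \(y_i\ge0\) when \(d_ix_i=0\), \(y_i\le 0\) when \(d_ix_i=1\), and no constraint otherwise. Since \(d_i>0\), the two constraints for coordinate \(i\) cannot be simultaneously active, so these three cases are exhaustive. Everything else is routine.
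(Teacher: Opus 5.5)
Your proof is correct and follows essentially the same route as the paper: identify $T_\X(x)$ as the product of the coordinatewise cones $[0,\infty)$, $\R$, $(-\infty,0]$, note that the minimization in~\eqref{eq:projection_op} then decouples by coordinate, and read off the scalar projections. Your extra justification of the product structure, via the active constraints of the box (which can never both be active since $d_i>0$), fills in a step the paper only asserts.
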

\begin{proof}
Notice that for any \(x\in\X\), the tangent cone is \(T_{\X}(x)=\prod_{i=1}^n K_i(x_i)\), where, for each \(i \in [n]\),
\[
K_i(x_i)=
\begin{cases}
[0,\infty), & d_ix_i=0,\\
\R, & 0<d_i x_i<1,\\
(-\infty,0], & d_i x_i=1.
\end{cases}
\]
The tangent cone is a Cartesian product of intervals, so the projection as defined in~\eqref{eq:projection_op} is coordinatewise. The projection of \(v_i\) onto \(\R\), \([0,\infty)\), and \((-\infty,0]\) gives exactly~\eqref{eq:project_explicit}.
\end{proof}

The system~\eqref{eq:PDS} thus is but a linear system in the interior of \(\X\), and slides along the boundary of \(\X\) without ever leaving \(\X\) when trajectories hit the boundary. 
We next prove that the~\eqref{eq:PDS} is indeed the pointwise fast limit of~\eqref{eq:t-LTN} as \(\tau \to 0^+\).

\begin{proposition}[Fast limit of $\tau$-LTN is a PDS]
\label{prop:fast_limit}
For every \(x\in\X\),
\[
\lim_{\tau\to 0^+} f_\tau(x)=\Pi_{\X}(x,Ax+u).
\]
Hence~\eqref{eq:t-LTN} converges pointwise to~\eqref{eq:PDS} as \(\tau\to0^+\).
\end{proposition}
\begin{proof}
For any \(x\in\X\), consider an arbitrary component $i$ and let \(y_i=d_i x_i\in[0,1]\) and \(g_i=(Ax+u)_i\). 
Then \(f_{\tau,i}(x) \coloneqq (f_{\tau}(x))_i=(1/\tau)\left(-y_i+[y_i+\tau g_i]_0^1\right)
\).
We proceed case by case.
If \(0<y_i<1\), then \(f_{\tau,i}(x)\to g_i\); 
if \(y_i=0\), then \(f_{\tau,i}(x)\to(g_i)_+\); 
if \(y_i=1\), then \(f_{\tau,i}(x)\to -(g_i)_-\), each for \(\tau \to 0^+\).
By Lemma~\ref{lem:projection_formula}, this is exactly \(\Pi_{\X}(x,Ax+u)_i\).
\end{proof}

Next, we show that~\eqref{eq:PDS} and~\eqref{eq:t-LTN} share equilibria.
\begin{proposition}\longthmtitle{PDS and LTN have the same equilibria}
\label{prop:PDS_LTN_same_eq}
A point \(x^\star\in\X\) is an equilibrium of~\eqref{eq:PDS} if and only if it is an equilibrium of~\eqref{eq:LTN}.
\end{proposition}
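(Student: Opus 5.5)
The plan is to compare, coordinate by coordinate, the equilibrium condition of~\eqref{eq:PDS} given by Lemma~\ref{lem:projection_formula} with the equilibrium characterization~\eqref{eq:eq_cond_LTN} from the proof of Proposition~\ref{prop:common_equilibria}. By Proposition~\ref{prop:common_equilibria}(i), the equilibria of~\eqref{eq:LTN} (the $\tau=1$ member) are exactly the points $x\in\X$ satisfying~\eqref{eq:eq_cond_LTN}. It therefore suffices to show that $\Pi_\X(x,Ax+u)=0$ holds if and only if~\eqref{eq:eq_cond_LTN} holds.

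Fix $x^\star\in\X$, write $g=Ax^\star+u$, and note that $\Pi_\X(x^\star,g)=0$ if and only if $\Pi_\X(x^\star,g)_i=0$ for every $i\in[n]$. By Lemma~\ref{lem:projection_formula} there are three cases.
\begin{enumerate}[(a)]
    \item If $d_ix_i^\star=0$, the $i$th component is $(g_i)_+$, which vanishes if and only if $g_i\le 0$.
    \item If $0<d_ix_i^\star<1$, the component is $g_i$, which vanishes if and only if $g_i=0$.
    \item If $d_ix_i^\star=1$, the component is $-(g_i)_-$, which vanishes if and only if $g_i\ge 0$.
\end{enumerate}
These three conditions coincide exactly with the three cases of~\eqref{eq:eq_cond_LTN}. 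Since the cases partition $\X$ componentwise, the equivalence holds componentwise and hence for the whole vector.

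One could alternatively verify the condition directly from $f_1(x)=0$, i.e., from $Dx=[Dx+(Ax+u)]_0^1$, but that is precisely what was already done in the proof of Proposition~\ref{prop:common_equilibria}(i), so I would simply invoke it. There is no real obstacle. The only point needing care is that equilibria of~\eqref{eq:LTN} should be sought within $\X$. This holds because any zero of $f_1$ satisfies $Dx^\star=[Wx^\star+u]_0^1\in[0,1]^n$, so every LTN equilibrium automatically lies in $\X$, which makes the restriction to $\X$ harmless.
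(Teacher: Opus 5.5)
Your proof is correct and matches the paper's argument. The paper likewise observes that setting each component of the explicit projection formula of Lemma~\ref{lem:projection_formula} (with \(v = Ax+u\)) to zero gives exactly the three-case condition~\eqref{eq:eq_cond_LTN}, with~\eqref{eq:LTN} being the \(\tau=1\) member, and your remark that every zero of \(f_1\) automatically lies in \(\X\) is a harmless, slightly more careful addition.
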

\begin{proof}
Notice that the equilibrium conditions on the explicit PDS dynamics (substituting \(v = Ax + u\)) in Lemma~\ref{lem:projection_formula} is exactly~\eqref{eq:eq_cond_LTN}, the equilibrium condition for~\eqref{eq:t-LTN}, of which~\eqref{eq:LTN} is a member.
\end{proof}

Propositions~\ref{prop:fast_limit} and~\ref{prop:PDS_LTN_same_eq} explain the relevance of 
of~\eqref{eq:PDS} to~\eqref{eq:t-LTN}.
Recall that,  under the LDS condition,  LTN (and therefore PDS) has a unique equilibrium.
We next prove the global exponential stability (GES) of unique equilibrium~\(x^\star\) of~\eqref{eq:PDS} under the LDS condition.

\begin{theorem}\longthmtitle{GES of the fast PDS limit under LDS}\label{thm:PDS_GES}
Assume \(A\in\mc LDS\) with diagonal certificate \(\Lambda \succ 0\). 
Then~\eqref{eq:PDS} has a unique equilibrium \(x^\star\in\X\), and for each initial condition \(x(0) \in \X\), for all \(t\geq0\)
\[
\lVert x(t)-x^\star\rVert_{\Lambda}\le \mathrm{e}^{-\mu t} \lVert x(0)-x^\star\rVert_{\Lambda},
\]
for some \(\mu>0\). 
Thus \(x^\star\) is globally exponentially stable.
\end{theorem}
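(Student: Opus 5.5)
Uniqueness of the equilibrium is already available: by Proposition~\ref{prop:PDS_LTN_same_eq} the equilibria of~\eqref{eq:PDS} coincide with those of~\eqref{eq:LTN}, and Proposition~\ref{prop:LTN-eq} (or Proposition~\ref{prop:common_equilibria}(ii)) gives exactly one such point $x^\star\in\X$ under $A\in\mc LDS$. For the exponential estimate I will use the weighted quadratic Lyapunov function $V(x)=\tfrac12\lVert x-x^\star\rVert_\Lambda^2=\tfrac12(x-x^\star)^\top\Lambda(x-x^\star)$. Since trajectories are absolutely continuous, $V(x(t))$ is absolutely continuous with $\tfrac{d}{dt}V=(x-x^\star)^\top\Lambda\,\Pi_\X(x,Ax+u)$ for a.e.\ $t$.

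The key step is a componentwise comparison that exploits the diagonality of $\Lambda$ and the box structure of $\X$ (Lemma~\ref{lem:projection_formula}). Write $g=Ax+u$. I claim that for each $i$,
\[
\lambda_i (x_i-x_i^\star)\,\Pi_\X(x,g)_i \le \lambda_i (x_i-x_i^\star)\, g_i .
\]
The cases are as follows. If $0<d_ix_i<1$ this is an equality. If $d_ix_i=0$, then $x_i-x_i^\star=-x_i^\star\le 0$, and the projection replaces $g_i$ by $(g_i)_+\ge g_i$, so the product can only decrease. If $d_ix_i=1$, then $x_i-x_i^\star\ge0$ and $-(g_i)_-\le g_i$. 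Summing over $i$ gives $\dot V\le (x-x^\star)^\top\Lambda(Ax+u)$.

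Next I subtract the equilibrium term. From the conditions~\eqref{eq:eq_cond_LTN}, with $g^\star=Ax^\star+u$, each $(x_i-x_i^\star)g^\star_i\ge0$. Indeed, if $d_ix_i^\star=0$ then $g_i^\star\le0$ and $x_i-x_i^\star\le 0$; if $d_ix^\star_i=1$ then $g_i^\star\ge 0$ and $x_i-x_i^\star\ge0$; otherwise $g_i^\star=0$. Hence $(x-x^\star)^\top\Lambda g^\star\ge0$, and
\[
\dot V\le (x-x^\star)^\top\Lambda A(x-x^\star)=\tfrac12 (x-x^\star)^\top(A^\top\Lambda+\Lambda A)(x-x^\star)\le -\mu\,(x-x^\star)^\top\Lambda(x-x^\star)=-2\mu V.
\]
Here $\mu:=\lambda_{\min}\bigl(-(A^\top\Lambda+\Lambda A)\bigr)/(2\max_i\lambda_i)>0$, using $\Lambda\preceq(\max_i\lambda_i)I$.

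Gr\"onwall's inequality, applied a.e.\ to the absolutely continuous $V$, then yields $V(x(t))\le e^{-2\mu t}V(x(0))$, which after taking square roots is the stated bound. The main obstacle is the componentwise projection inequality together with the sign of the equilibrium term. Both hinge on the weight being diagonal, so that the weighted norm respects the product structure of the tangent cone. A non-diagonal $P$ would break this step, which explains why LDS rather than mere Hurwitz stability is the right hypothesis. The only care needed beyond that is the a.e.\ differentiability of $V$ along the solutions, which follows from the existence and uniqueness result cited for~\eqref{eq:PDS}.
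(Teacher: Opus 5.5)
Your route is the same as the paper's. You use the weighted quadratic $V=\tfrac12\lVert x-x^\star\rVert_\Lambda^2$, the componentwise bound $z_i\,\Pi_\X(x,g)_i\le z_i g_i$ from Lemma~\ref{lem:projection_formula}, removal of the equilibrium term, the LDS bound, and Gr\"onwall. Your case analysis for the projection bound, split on $x$, is correct. Your $\mu$ is slightly more conservative than the paper's $-\tfrac12\lambda_{\max}\bigl(\Lambda^{-1/2}(A^\top\Lambda+\Lambda A)\Lambda^{-1/2}\bigr)$, but it is equally valid.

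The equilibrium-term step, however, is wrong as written. There the cases are split on $x^\star$, not on $x$:
\begin{itemize}
\item if $d_ix_i^\star=0$, then $x_i^\star=0$ and $x_i\ge 0$, so $x_i-x_i^\star\ge 0$, not $\le 0$;
\item if $d_ix_i^\star=1$, then $x_i\le 1/d_i=x_i^\star$, so $x_i-x_i^\star\le 0$, not $\ge 0$.
\end{itemize}
Combined with $g_i^\star\le 0$ and $g_i^\star\ge 0$ respectively, this gives $(x_i-x_i^\star)g_i^\star\le 0$. Hence $(x-x^\star)^\top\Lambda g^\star\le 0$, which is the variational-inequality characterization of the equilibrium: $g^\star$ lies in the normal cone of $\X$ at $x^\star$. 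Your claimed $(x_i-x_i^\star)g_i^\star\ge 0$ is false in general; take $x_i^\star=0$, $g_i^\star<0$, $x_i>0$.

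The sign you asserted would also not have justified your next line. From $\dot V\le z^\top\Lambda Az+z^\top\Lambda g^\star$, you may drop the cross term only if it is nonpositive. So two errors cancel, and your displayed conclusion $\dot V\le z^\top\Lambda Az$ happens to be correct. Once the sign is corrected to $\le 0$, this step is exactly the paper's, and the rest of your argument goes through unchanged.
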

\begin{proof}
By Propositions~\ref{prop:common_equilibria} and~\ref{prop:PDS_LTN_same_eq},~\eqref{eq:PDS} has a unique equilibrium \(x^\star\in\X\).
Since \(A\in\mc LDS\), there exists a diagonal \(\Lambda=\diag(\lambda_1,\dots,\lambda_n)\succ0\) such that
\(A^\top\Lambda+\Lambda A\prec0\).
Thus for \(\mu = -\frac12\,\lambda_{\max}\left(\Lambda^{-1/2}(A^\top\Lambda+\Lambda A)\Lambda^{-1/2}\right)\) it holds that
\begin{equation}\label{eq:LDS_mu_fast}
z^\top(A^\top\Lambda+\Lambda A)z
\le -2\mu\, z^\top\Lambda z ,
\quad \forall z\in\R^n.
\end{equation}

Let \(x:[0,\infty) \rightarrow \X\) be any absolutely continuous solution of~\eqref{eq:PDS} and define
\begin{equation}\label{eq:fast_Lyap}
    V_\Lambda(x)=\frac12 \lVert x-x^\star\rVert_\Lambda^2
\end{equation}
Set \(z=x-x^\star\) and \(v=Ax+u\).
Then, for a.e.\ \(t\), \(\dot V_\Lambda(t)=z^\top \Lambda\dot x(t)\).
Lemma~\ref{lem:projection_formula} gives the exact expressions for \(\dot x_i\).
In each case, \(z_i\dot x_i\le z_i v_i\):
in the interior of \(\X\), equality holds; 
if \(d_ix_i=0\), then \(z_i=-x_i^\star\le0\) and \((v_i)_+\ge v_i\), so \(z_i(v_i)_+\le z_i v_i\);
if \(d_i x_i=1\), then \(z_i=1/d_i-x_i^\star\ge0\) and \(-(v_i)_-\le v_i\). 
Thus \(z_i\dot x_i\le z_i v_i\). Weighting each inequality by \(\lambda_i>0\) and adding up, \(z^\top \Lambda \dot x \leq z^\top \Lambda v\). 
Therefore
\[
\dot V_\Lambda
\le
z^\top\Lambda(Ax+u)
=
z^\top\Lambda A z+z^\top\Lambda(Ax^\star+u).
\]
Now \(x^\star\) the equilibrium, so by Proposition~\ref{prop:PDS_LTN_same_eq}, \(x^\star\) satisfies~\eqref{eq:eq_cond_LTN}.
Since \(x\in\X\), it follows in every coordinate that \(z_i(Ax^\star+u)_i \le 0\) and subsequently \(z^\top\Lambda(Ax^\star+u) \le 0\).
Using~\eqref{eq:LDS_mu_fast},
\[\dot V_\Lambda\le z^\top\Lambda A z = \frac12 z^\top(A^\top\Lambda+\Lambda A)z \le -\mu \lVert z\rVert_\Lambda^2=-2\mu V.\]
Gr\"onwall's inequality~\cite{HK:02} yields \(V(t)\le \mathrm{e}^{-2\mu t}V_\Lambda(0)\), i.e.,
\[\lVert x(t)-x^\star\rVert _\Lambda\le \mathrm{e}^{-\mu t}\lVert x(0)-x^\star\rVert_\Lambda,\]
which proves global exponential stability.
\end{proof}

\medskip

Theorem~\ref{thm:PDS_GES} establishes that the unique equilibrium of the PDS is globally exponentially stable. 
An alternative route to show this result would follow from using step~\eqref{eq:LDS_mu_fast} to claim globally \(\mu\)-strong monotonicity of \(Ax + u\) in the \(\Lambda\) metric, and then invoking~\cite[Theorem 4.3]{DZ-AN:95}. However, our proof displays the relevance of the Lyapunov function~\eqref{eq:fast_Lyap}.
The fast limit of the~\eqref{eq:t-LTN} captures the dissipation of the linear region enforced by the LDS condition, which is one motif that leads to hypothesized stability of the LTN as a whole under LDS.
While the metric~\eqref{eq:fast_Lyap} used as the Lyapunov function in Theorem~\ref{thm:PDS_GES} captures the stability of the PDS limit (and therefore, intuitively, of small \(\tau\) members of the \(\tau\)-LTN family), it fails for arbitrary \(\tau\).
The other timescale limit further reveals an important dynamical consequence of the LDS condition, which we explore next.

\section{Slow limit: hard-selector inclusion}
Here we study the global stability properties of the slow limit of the $\tau$-LTN family. 
Throughout the section, we rescale the time to the slow timescale \(s = t/\tau\). For ease of notation, we denote \((\cdot)' = \tfrac{d(\cdot)}{ds}\).
Let the time-rescaled \(\tau\)-LTN family be given by \(x' = F_\tau(x)\), where
\begin{equation}\label{eq:t-LTN_slow}
     F_\tau(x) = -Dx+[Dx+\tau (Ax + u)]_0^1.
\end{equation}
We introduce the hard-selector system (HSS) inclusion
\begin{equation}\label{eq:HSS}\tag{HSS}
 x' \in -Dx+\mc H(Ax + u) =: \mc F(x),
\end{equation}
where \(\mc H:\R^n \rightarrow \mathfrak B([0,1]^n)\) is defined in~\eqref{eq:hard_sel_cvx}.
Under the mild assumption of \(A\) being invertible (automatic under LDS), we show that 
\(\mc F\)
is the Filippov regularization of the \(\tau \to +\infty\) pointwise limit of the time-rescaled \(\tau\)-LTN family vector field \(F_\tau(x)\).
We establish these facts below.
\begin{proposition}\longthmtitle{HSS is the Filippov regularization of the rescaled slow limit}
\label{prop:t-LTN_to_slow_limit}
Let \(F_\infty: \X \rightarrow \R^n\) be defined componentwise as
\begin{equation}\label{eq:slow_pointwise}
(F_\infty(x))_i = \begin{cases}
    - d_ix_i, & A_i^\top x + u_i < 0\\
    0, & A_i^\top x + u_i = 0\\
    1 - d_ix_i, & A_i^\top x + u_i > 0
\end{cases}
\end{equation}
for each \(i \in [n]\).
The following hold:
    \begin{enumerate}[(i)]
        \item For each \(x \in \X\),
        \(\lim_{\tau\to+\infty} F_\tau(x) = F_\infty(x)\);
        \item Assume \(\det A\neq 0\).
        Then, \(\operatorname{F}[F_\infty]\), the Filippov regularization~\eqref{eq:filippov_reg} of \(F_\infty\) is \(\mc F\) as defined in~\eqref{eq:HSS}.
    \end{enumerate}
\end{proposition}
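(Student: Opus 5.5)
For part (i), I will argue componentwise, as in the proof of Proposition~\ref{prop:fast_limit}. Fix \(x\in\X\) and \(i\in[n]\), and set \(y_i=d_ix_i\in[0,1]\) and \(g_i=A_i^\top x+u_i\), so that \((F_\tau(x))_i=-y_i+[y_i+\tau g_i]_0^1\). There are three cases.
\begin{itemize}
\item If \(g_i>0\), then \(y_i+\tau g_i\to+\infty\). The saturation therefore equals \(1\) for all large \(\tau\), and the component tends to \(1-d_ix_i\).
\item If \(g_i<0\), the saturation equals \(0\) for large \(\tau\), and the limit is \(-d_ix_i\).
\item If \(g_i=0\), then \([y_i]_0^1=y_i\) because \(y_i\in[0,1]\). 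The component is identically \(0\) for every \(\tau\).
\end{itemize}
This matches~\eqref{eq:slow_pointwise} and proves (i).

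For part (ii), I will evaluate the Filippov set~\eqref{eq:filippov_reg} directly. I read \(F_\infty\) through the same formula~\eqref{eq:slow_pointwise} on a neighborhood of \(\X\), so that balls around boundary points are covered. Write \(g(x)=Ax+u\) and \(F_\infty(x)=-Dx+S(g(x))\), where \(S\) is the single-valued selector from~\eqref{eq:slow_pointwise}. The discontinuity set of \(F_\infty\) is contained in \(\bigcup_i\{x: A_i^\top x+u_i=0\}\). Since \(\det A\ne0\), every row \(A_i\) is nonzero, so this set is a finite union of hyperplanes and has Lebesgue measure zero. It can therefore be absorbed into the excluded null sets \(S\) in~\eqref{eq:filippov_reg}, and the values \(0\) assigned on the hyperplanes play no role. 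Because \(-Dx\) is continuous, it contributes exactly the single point \(-Dx\) after intersecting over \(\delta\to0\). It thus suffices to show that the \(S(g)\)-part produces exactly \(\mc H(g(x))\).

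Fix \(x\) and let \(I=\{i: g_i(x)=0\}\).
\begin{itemize}
\item For \(i\notin I\), continuity gives \(\operatorname{sign} g_i(\xi)=\operatorname{sign} g_i(x)\) for all \(\xi\) in a small ball. Those components of \(S(g(\xi))\) are then constantly \(h(g_i(x))\), which is a singleton.
\item For \(i\in I\), I will use invertibility of \(A\) in the key step. The map \(\xi\mapsto A\xi+u\) is an affine bijection sending \(x\) to \(g(x)\). For each sign pattern \(\sigma\in\{-1,+1\}^I\), the set of \(\xi\in B(x,\delta)\) with \(\operatorname{sign} g_i(\xi)=\sigma_i\) for \(i\in I\) is the preimage of an open orthant-type cone intersected with an open neighborhood of \(g(x)\). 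That set is nonempty and open in \(g\)-space, so its preimage has positive Lebesgue measure. No null set \(S\) can remove it, and every vertex of \(\{0,1\}^I\) is attained by \(S(g)\) on the \(I\)-components.
\end{itemize}
The closed convex hull of all vertices of the cube \([0,1]^I\) is the cube itself. Conversely, every value of \(S(g)\) on the non-null part of the ball lies in this cube on the \(I\)-components, and in the fixed singletons on the others. Hence \(\operatorname{F}[F_\infty](x)=-Dx+\prod_i h(g_i(x))=\mc F(x)\), including the shrinking contribution \(-D\xi\to-Dx\).

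I expect the main obstacle to be the positive-measure realization of all sign patterns. Without \(\det A\ne0\), rows indexed by \(I\) could be linearly dependent, some sign patterns would be infeasible, and the Filippov set would be strictly smaller than \(\mc H\). Invertibility is exactly what makes \(g\) a local homeomorphism preserving positive measure. A secondary technical point is the domain: near points of \(\partial\X\), I will either use the extension described above, or check that \(B(x,\delta)\cap\X\) still meets each preimage cone in positive measure, since \(\X\) is a full-dimensional box containing \(x\).
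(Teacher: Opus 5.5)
Your proof is correct and follows the paper's route. Part (i) uses the same three-case limit. Part (ii) uses the same ingredients: local sign-constancy of \(g_i\) for \(i\notin I\), invertibility of \(A\) to realize every sign pattern on \(I\) on an open (hence non-null) set arbitrarily close to \(x\), and closed convexity to fill in the cube. Splitting off the continuous term \(-Dx\) is actually cleaner than the paper's assertion that \(F_\infty(y_t)=\eta\) exactly.

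One caveat: extending \(F_\infty\) beyond \(\X\) is necessary, not optional. Your fallback of working in \(B(x,\delta)\cap\X\) fails: for \(n=1\), \(A=1\), \(u=0\), \(x=0\), it yields \(\{1\}\) instead of \(\mc F(0)=[0,1]\).
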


\begin{proof}
    Let \(g(x) = Ax + u\) and let \(g_i(x) = A_i^\top x + u_i\) be its \(i\)th component. 
    For (i), notice that \((F_\tau(x))_i = -d_ix_i + [d_ix_i + \tau g_i(x)]_0^1\).
    If \(g_i(x)>0\), \([d_ix_i + \tau g_i(x)]_0^1 = 1\) for sufficiently large \(\tau\).
    Similarly, if \(g_i(x)<0\), \([d_ix_i + \tau g_i(x)]_0^1 = 0\) for sufficiently large \(\tau\).
    If \(g_i(x) = 0\), and since \(d_ix_i \in [0,1]\), \([d_ix_i + \tau g_i(x)]_0^1 = d_ix_i\).
    Thus, in every case, \(\lim_{\tau\to+\infty} F_\tau(x) = F_\infty(x)\).

    For (ii), we first show that, for each \(x\in\X\), \(\operatorname{F}[F_\infty](x) \subseteq -Dx + \mc H(g(x))\).
    Suppose \(g_i(x) > 0\). 
    By the continuity of \(g_i\) in \(\R^n\), for all \(y\) sufficiently close to \(x\), \((F_\infty(y))_i = 1 - d_iy_i\) and thus \((\operatorname{F}[F_\infty](x))_i = 1 - d_ix_i\).
    Similarly, for \(g_i(x)< 0\), \((\operatorname{F}[F_\infty](x))_i = - d_ix_i\).
    When \(g_i(x) = 0\), we may have points \(y\) such that \(g_i(y)>0\) or \(g_i(y)<0\), so that \((F_\infty(y))_i = 1 - d_iy_i\), \(- d_iy_i\), or \(0\). 
    These are contained in interval \(-d_ix_i + [0,\,1]\), which is their convex closure. 
    Thus, \(\operatorname{F}[F_\infty](x) \subseteq -Dx + \mc H(g(x))\).
    
    We now consider the reverse inclusion.
    Let \(I:=\{i\in[n]:g_i(x)=0\}\). 
    Since \(A\) is invertible, the submatrix \(A_I\) has full row rank.
    Hence for any corner of \(-Dx+\mc H(g(x))\), there exists \(h\in\R^n\) such that \(A_i^\top h>0\) whenever \(\eta_i=1-d_i x_i\) and \(A_i^\top h<0\) whenever \(\eta_i=-d_i x_i\), for all \(i\in I\). 
    Then for \(y_t:=x+th\), one has \(g_i(y_t)=tA_i^\top h\) for \(i\in I\), so its sign agrees with the one corresponding to \(\eta_i\); for \(g_i(x)\neq0\), continuity preserves the sign for all sufficiently small \(t>0\). 
    Thus \(F_\infty(y_t)=\eta\) for all sufficiently small \(t>0\), and in fact on a nonempty open set arbitrarily close to \(x\). 
    This set has positive measure since \(h\) can be perturbed to generate an open set, while still generating the same strict sign in \(A_i^\top \tilde h \lessgtr 0\) for  \(\tilde h\) belonging to that open set, since \(y\mapsto Ay\) is continuous.
    Therefore, removing a null set cannot eliminate all such points, and \(\eta\in \operatorname{F}[F_\infty](x)\). 
    Since \(\operatorname{F}[F_\infty](x)\) is closed and convex (being the intersection of convex closures), it contains the convex hull of all corners of \(-Dx+\mc H(g(x))\), namely \(-Dx+\mc H(g(x))\). 
    Thus \(-Dx+\mc H(g(x))\subseteq \operatorname{F}[F_\infty](x)\).
    Combining with the first inclusion gives \(\operatorname{F}[F_\infty](x)= -Dx+\mc H(g(x))\).
\end{proof}
\begin{remark}
    In the proof above, why does the reverse inclusion require the invertibility of \(A\)? 
    The Filippov regularization of \(F_\infty\) may be strictly smaller than~\eqref{eq:HSS} if \(A\) loses rank and each component cannot independently generate all sign patterns.
    A simple example is furnished.
    Let \(A = [1,\,0;\, 1,\, 0]\) with \(u=0\). Then, at \(x=0\), \(\operatorname{F}[F_\infty](0) = \overline{\operatorname{co}}(\{(0,\,0),\, (1,\,1)\})\) collapses to a line while \(\mc H(0) = [0,1]^2\); the former cannot contain the latter.
\end{remark}

Thus, importantly, a Filippov solution of the discontinuous slow limit is an absolutely continuous function \(x:[0,\infty) \rightarrow \R^n\) satisfying the inclusion~\eqref{eq:HSS}.
Similar to the fast limit, we first prove that the convexified selector system~\eqref{eq:HSS} indeed contains the pointwise slow limit of~\eqref{eq:t-LTN} reparametrized to slow time.
We comment on the existence and forward invariance of the slow-limit system solutions on \(\X\) below.
\begin{proposition}\longthmtitle{Existence of slow-limit Filippov solutions and strong forward invariance}
\label{prop:slow_existence_invariance}
Let \(\det A \neq 0\). 
Then, for each \(x(0) \in \X\), the following hold:
\begin{enumerate}[(i)]
    \item \(x'(s) = F_\infty(x(s))\) possesses Filippov solutions given by absolutely continuous trajectories \(s\mapsto x(s)\in \R^n\) satisfying~\eqref{eq:HSS} for a.e. \(s\geq 0\), and
    \item \(\X\) is strongly forward invariant for Filippov solutions of \(x'(s) = F_\infty(x(s))\).
\end{enumerate}
\end{proposition}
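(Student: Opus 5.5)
For part (i), I will apply the standard Filippov existence result recalled in the preliminaries (\cite{AFF:88}, \cite[Proposition 3]{JC:08}): Filippov solutions exist as soon as the vector field is measurable and locally essentially bounded. Measurability of \(F_\infty\) is immediate. Each component of \(F_\infty\) is a piecewise affine function of \(x\), with pieces separated by the hyperplanes \(\{A_i^\top x+u_i=0\}\), which are closed or open sets. Local boundedness is also clear, since \(|(F_\infty(x))_i|\le 1+d_i|x_i|\).

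Extending \(F_\infty\) to all of \(\R^n\) by the same formula \eqref{eq:slow_pointwise}, we obtain a local absolutely continuous solution from every \(x(0)\in\X\). By Proposition~\ref{prop:t-LTN_to_slow_limit}(ii), which uses \(\det A\neq 0\), any such solution satisfies \(x'(s)\in \operatorname{F}[F_\infty](x(s))=\mc F(x(s))\) for a.e.\ \(s\), that is, the inclusion \eqref{eq:HSS}. The right-hand side has linear growth: \(\mc F(x)\subseteq -Dx+[0,1]^n\). A Gr\"onwall bound therefore rules out finite-time escape, and solutions extend to \([0,\infty)\). Forward invariance of the compact set \(\X\), proved in (ii), gives this directly as well.

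For part (ii), I will show componentwise that every Filippov solution starting in \(\X\) keeps \(y_i(s):=d_ix_i(s)\in[0,1]\). From \eqref{eq:HSS}, for a.e.\ \(s\) we have \(y_i'(s)=d_i x_i'(s)\in d_i(-y_i(s)+[0,1])\). Hence
\[
-d_i y_i \le y_i' \le d_i(1-y_i)\quad\text{a.e.}
\]
Consider \(w=(y_i)_-\), the negative part. It is absolutely continuous, and a.e.\ on \(\{y_i<0\}\) we have \(w'=-y_i'\le d_iy_i=-d_iw\le 0\); elsewhere \(w'=0\) a.e. Since \(w(0)=0\), we get \(w\equiv 0\).

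The upper bound is symmetric. Set \(\tilde w=(y_i-1)_+\). On \(\{y_i>1\}\) we have \(\tilde w'=y_i'\le d_i(1-y_i)<0\), so \(\tilde w\equiv 0\). Equivalently, one can compare with the linear ODEs \(y'=-d_iy\) and \(y'=d_i(1-y)\): this is a differential-inequality comparison. Since the argument applies to every solution of the inclusion, the invariance is strong.

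The main subtlety is not invariance, which holds for the convexified right-hand side regardless of which sign pattern is active. It is the bookkeeping in (i) to make the Filippov theory apply cleanly on \(\X\). The existence theorem is stated on \(\R^n\), so I will define \(F_\infty\) on \(\R^n\), obtain solutions there, and then use (ii) to confine them to \(\X\). The Filippov set is computed on full balls, so I also need the identification \(\operatorname{F}[F_\infty]=\mc F\) from Proposition~\ref{prop:t-LTN_to_slow_limit} to remain valid at boundary points of \(\X\). Its proof only uses neighborhoods in \(\R^n\), so I expect it to carry over verbatim.
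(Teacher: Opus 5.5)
Your part (i) follows the paper's argument. Measurability and local essential boundedness give Filippov solutions via \cite[Proposition 3]{JC:08}, and Proposition~\ref{prop:t-LTN_to_slow_limit}(ii) identifies them with solutions of \eqref{eq:HSS}. You add two things the paper leaves implicit. First, the pieces of \(F_\infty\) are Borel, so \(\det A\neq0\) is needed only for the identification; the paper instead invokes it to get a.e.\ continuity and hence measurability. Second, you justify existence on all of \([0,\infty)\).

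Part (ii) is where your route genuinely differs. The paper only checks the sign of \(x_i'\) on \(\partial\X\): \(x_i'=\sigma_i\ge0\) when \(x_i=0\), and \(x_i'\le0\) when \(d_ix_i=1\). This Nagumo-type tangency check is short and shows the mechanism. But for an inclusion with nonunique solutions it does not by itself rule out exit. A solution may touch \(\partial\X\) only on a null set of times, and the check says nothing about points just outside \(\X\), where the paper's \(F_\infty\) is not even defined. Your extension of \(F_\infty\) to \(\R^n\), combined with the comparison on \((y_i)_-\) and \((y_i-1)_+\), uses the inclusion off \(\X\) and closes this gap. The cost is the domain bookkeeping you describe.

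One precision on that bookkeeping. The paper's \(\subseteq\) argument does not carry over verbatim at points strictly outside \(\X\). The value \(0\) that \(F_\infty\) takes on \(\{A_i^\top x+u_i=0\}\) lies in \(-d_ix_i+[0,1]\) only when \(d_ix_i\in[0,1]\). This is harmless for two reasons. That hyperplane is Lebesgue-null (since \(A_i\neq0\)), so the Filippov operator discards it. And even if it were kept, the extra value \(0\) still gives \(y_i'\ge0\) on \(\{y_i<0\}\) and \(y_i'\le0\) on \(\{y_i>1\}\), which is all your comparison uses.
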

\begin{proof} 
\(F_\infty\) takes bounded values everywhere on \(\X\). 
In particular, it is locally essentially bounded.
Since \(A\) is invertible, each set \(\{x \in \X :A_i^\top x + u_i= 0\}\) is either empty or is an affine hyperplane, and is of measure zero. 
Consequently, \(\bigcup_{i\in[n]}\{x \in \X :A_i^\top x + u_i= 0\}\) has measure zero.
Thus, \(F_\infty\) is continuous a.e. on \(\X\) and is measurable on \(\X\).
By~\cite[Proposition 3]{JC:08}, \(x'(s) = F_\infty(x(s))\) has Filippov solutions.
By Proposition~\ref{prop:t-LTN_to_slow_limit} (i), under \(\det A \neq 0\), \(\operatorname{F}[F_\infty] = \mc F\), and by definition, absolutely continuous solutions satisfying \(x'(s) \in \mc F(x(s))\) a.e. \(s\) are exactly Filippov solutions of \(x'(s) = F_\infty(x(s))\).

For (ii), if \(x_i=0\), then \(x'_i=\sigma_i\in[0,1]\) for a.e.\ time. If \(d_i x_i=1\), then
\(\dot x_i=\sigma_i-1\in[-1,0]\)
for a.e.\ time. Hence no coordinate can leave the interval \([0,1/d_i]\), and all possible solutions from an initial condition are forward invariant, giving us the claim.
\end{proof}

We note at this juncture that~\eqref{eq:HSS} has not been established to admit unique trajectories; it probably does not in general for arbitrary \(D \in \mathbb D_+^n\), \(W\in \R^{n\times n}\), especially when neighboring regions are repulsive at the shared boundary.

\begin{proposition}\longthmtitle{HSS and LTN have the same equilibria}
\label{prop:HSS_equilibria}
The following are true:
\begin{enumerate}[(i)]
    \item~\eqref{eq:HSS} and~\eqref{eq:LTN} have the same equilibria, and thus so does the~\eqref{eq:t-LTN} family and its degenerate endpoints.
    \item Assume \(A \in \mc LDS\).~\eqref{eq:HSS} admits a unique equilibrium.
\end{enumerate}
\end{proposition}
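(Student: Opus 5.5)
The plan is to characterize the equilibria of \eqref{eq:HSS} coordinatewise, in the same way Proposition~\ref{prop:common_equilibria} did for the \(\tau\)-LTN family. A point \(x\in\X\) is an equilibrium of the inclusion \eqref{eq:HSS} exactly when \(0\in\mc F(x)\), that is, when there is some \(\sigma\in\mc H(Ax+u)\) with \(Dx=\sigma\). I would read this condition one coordinate at a time, using the definition \eqref{eq:hard_sel_cvx} of \(h\).

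Fix \(i\in[n]\) and consider the three signs of \((Ax+u)_i\).
\begin{enumerate}[(a)]
\item If \((Ax+u)_i<0\), then \(\sigma_i=0\), so \(d_ix_i=0\).
\item If \((Ax+u)_i>0\), then \(\sigma_i=1\), so \(d_ix_i=1\).
\item If \((Ax+u)_i=0\), then \(\sigma_i\in[0,1]\) is free, so \(d_ix_i\) may be any value in \([0,1]\). This holds automatically because \(x\in\X\).
\end{enumerate}

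Next I would invert these implications to match the three cases of \eqref{eq:eq_cond_LTN}.
\begin{itemize}
\item If \(0<d_ix_i<1\), cases (a) and (b) are excluded, so \((Ax+u)_i=0\).
\item If \(d_ix_i=0\), case (b) is excluded, so \((Ax+u)_i\le 0\).
\item If \(d_ix_i=1\), case (a) is excluded, so \((Ax+u)_i\ge0\).
\end{itemize}
Conversely, suppose \(x\) satisfies \eqref{eq:eq_cond_LTN}, and set \(\sigma=Dx\). In each case \(\sigma_i\in h((Ax+u)_i)\). For example, if \(d_ix_i=0\) and \((Ax+u)_i<0\), then \(\sigma_i=0\in h\); if \((Ax+u)_i=0\), then \(\sigma_i\in[0,1]=h(0)\). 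Hence \(0\in\mc F(x)\).

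This shows that the equilibrium set of \eqref{eq:HSS} coincides with the set defined by \eqref{eq:eq_cond_LTN}. That set is the equilibrium set of every member of \eqref{eq:t-LTN}, by Proposition~\ref{prop:common_equilibria}(i), and of \eqref{eq:PDS}, by Proposition~\ref{prop:PDS_LTN_same_eq}. This gives (i), including the statement about the degenerate endpoints. One point to check is that the equilibria are taken in \(\X\). Proposition~\ref{prop:slow_existence_invariance} gives forward invariance of \(\X\), and any \(x\) with \(Dx=\sigma\in[0,1]^n\) automatically lies in \(\X\), so nothing is lost by this restriction.

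For (ii), assume \(A\in\mc LDS\). Proposition~\ref{prop:LTN-eq} gives a unique equilibrium of \eqref{eq:LTN}, which exists by Brouwer and is unique by the nonsingularity of \(D-KW\). Part (i) then transfers both existence and uniqueness to \eqref{eq:HSS}. I do not expect a real obstacle here. The only delicate step is the set-valued case \((Ax+u)_i=0\), where one must confirm that the selector value \(\sigma_i\) can be chosen to equal \(d_ix_i\). This works precisely because \(d_ix_i\in[0,1]\) on \(\X\).
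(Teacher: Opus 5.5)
Your proof is correct and follows the paper's own argument: you read $0\in -Dx+\mc H(Ax+u)$ coordinatewise, match it against \eqref{eq:eq_cond_LTN}, and inherit existence and uniqueness in (ii) from the LTN result, checking both directions more explicitly than the paper does. The appeal to forward invariance is unnecessary, and that proposition also assumes $\det A\neq 0$; the relation $Dx=\sigma\in[0,1]^n$ already forces $x\in\X$.
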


\begin{proof}
Allow \(g(x) = Ax + u\), and let \(g_i\) be its \(i\)th component.
A point \(x\in\X\) is an equilibrium of~\eqref{eq:HSS} iff there exists \(\sigma\in\mc H(g(x))\) such that \(\sigma=Dx\).
Coordinatewise, if \(g_i(x)>0\), then \(\sigma_i=1\), so \(d_i x_i=1\). 
If \(g_i(x)<0\), then \(\sigma_i=0\), so \(d_ix_i=0\).
If \(g_i(x)=0\), then \(\sigma_i\in[0,1]\), so \(d_i x_i\in[0,1]\). 
This is exactly the equilibrium condition as for~\eqref{eq:LTN} as given in~\eqref{eq:eq_cond_LTN}.
For (ii), assume \(A\in \mc LDS\). 
Then, by Proposition~\ref{prop:common_equilibria},~\eqref{eq:LTN} admits a unique equilibrium, and by~(i), we are done.
\end{proof}

Assuming  \(A \in \mc LDS\), let \(x^\star\) be the unique equilibrium of~\eqref{eq:HSS}.
To discuss its global stability, define the  nonsmooth candidate Lyapunov function
\begin{equation}\label{eq:Vslow_def}
V_\infty(x):= \max_{\zeta \in \{0, 1\}^n}
(Ax + u)^\top \Lambda (\zeta-Dx)
\end{equation}
where \(\Lambda=\diag(\lambda_1,\dots,\lambda_n)\succ0\) is the diagonal matrix from the LDS condition on \(A\).
We will now discuss the validity of \(V_\infty\) as a Lyapunov function.

\begin{proposition}\longthmtitle{Slow-limit Lyapunov function}
\label{prop:Vslow_properties}
The function \(V_\infty\) defined in~\eqref{eq:Vslow_def} satisfies the following properties:
\begin{enumerate}[i)]
\item For every \(x\in\X\),
\begin{align*}
    V_\infty(x)
    =
    \sum_{i=1}^n
    \lambda_i\Big((A_i^\top x +& u_i)_+(1-d_i x_i)\\&+(A_i^\top x + u_i)_-\,d_i x_i\Big).
\end{align*}
Equivalently,
\[
V_\infty(x)
=
\sum_{i=1}^n
\lambda_i\Big((A_i^\top x + u_i)_+-d_i (A_i^\top x + u_i)x_i\Big).
\]
\item \(V_\infty\) is globally Lipschitz on \(\X\) and satisfies \(V_\infty(x)\ge0\) for all \(x\in\X\).
\item \(V_\infty(x)=0 \iff x\) is an equilibrium of~\eqref{eq:HSS}.
In particular, under \(A \in \mc LDS\), \(V_\infty\) is positive definite with respect to the unique equilibrium \(x^*\).
\end{enumerate}
\end{proposition}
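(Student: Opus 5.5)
The plan is to prove the three items in order, starting from the definition~\eqref{eq:Vslow_def}. For i), write \(g=Ax+u\) and expand \(g^\top\Lambda(\zeta-Dx)=\sum_i \lambda_i g_i(\zeta_i-d_ix_i)\). This sum is separable in \(\zeta\in\{0,1\}^n\), so the maximum can be taken coordinatewise. For each \(i\), the maximizer is \(\zeta_i=1\) if \(g_i>0\) and \(\zeta_i=0\) if \(g_i<0\); the choice does not matter when \(g_i=0\). The maximal value of the \(i\)th term is therefore \(\lambda_i\max(g_i(1-d_ix_i),\,-g_id_ix_i)\). Splitting \(g_i=(g_i)_+-(g_i)_-\) gives \((g_i)_+(1-d_ix_i)+(g_i)_-d_ix_i\), which is the first formula. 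The second formula follows by collecting terms, since \((g_i)_+ - d_ix_i\big((g_i)_+-(g_i)_-\big)=(g_i)_+-d_ig_ix_i\).

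For ii), nonnegativity is immediate from the first formula: on \(\X\) we have \(d_ix_i\in[0,1]\), so both \(1-d_ix_i\) and \(d_ix_i\) are nonnegative, and each term is a product of nonnegative factors. For Lipschitz continuity, use the second formula. The map \(x\mapsto (A_i^\top x+u_i)_+\) is globally Lipschitz. The map \(x\mapsto d_i(A_i^\top x+u_i)x_i\) is a quadratic polynomial, hence Lipschitz on the compact set \(\X\). A finite sum of these is Lipschitz on \(\X\). Alternatively, \(V_\infty\) is a maximum of finitely many smooth functions on a compact set.

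For iii), note that each term of \(V_\infty\) is nonnegative, so \(V_\infty(x)=0\) holds exactly when every term vanishes. Since \(\lambda_i>0\), this means \((g_i)_+(1-d_ix_i)=0\) and \((g_i)_-d_ix_i=0\) for every \(i\). These conditions say precisely that \(g_i>0\) forces \(d_ix_i=1\), \(g_i<0\) forces \(d_ix_i=0\), and \(g_i=0\) leaves \(d_ix_i\in[0,1]\) unconstrained. This is the coordinatewise case analysis in~\eqref{eq:eq_cond_LTN}, rearranged into contrapositive form: if \(0<d_ix_i<1\) then \(g_i=0\), and so on. By Proposition~\ref{prop:HSS_equilibria}(i), this characterization coincides with the equilibria of~\eqref{eq:HSS}. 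Under \(A\in\mc LDS\), Proposition~\ref{prop:HSS_equilibria}(ii) gives uniqueness of the equilibrium. Then \(V_\infty\ge0\), with equality only at \(x^\star\), which is positive definiteness with respect to \(x^\star\).

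The only step requiring care is checking that the zero set of \(V_\infty\) matches~\eqref{eq:eq_cond_LTN} exactly, including the boundary cases where \(g_i\ne0\) and \(d_ix_i\in\{0,1\}\). The rest is routine algebra.
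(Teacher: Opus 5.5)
Your proposal is correct and follows the paper's proof essentially step for step. You maximize coordinatewise over \(\zeta_i\in\{0,1\}\) according to the sign of \(g_i\), read off nonnegativity and Lipschitz continuity on the compact set \(\X\) from the explicit formulas, and identify the zero set by forcing each nonnegative term to vanish and matching the result with \eqref{eq:eq_cond_LTN} via Proposition~\ref{prop:HSS_equilibria}; the only difference is that you cite Proposition~\ref{prop:HSS_equilibria}(ii) for existence and uniqueness, where the paper cites Proposition~\ref{prop:common_equilibria}, which is the same fact.
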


\begin{proof}
Define \(g(x) = Ax + u\) and its \(i\)th component by \(g_i(x) = A_i^\top x + u_i\).
Thus, for each \(i\), one maximizes \(\lambda_i g_i(x)(\zeta_i-d_i x_i)\) over \(\zeta_i\in\{0,1\}\). 
If \(g_i(x)>0\), the maximizer is \(\zeta_i=1\). 
If \(g_i(x)<0\), the maximizer is \(\zeta_i=0\). 
If \(g_i(x)=0\), that component does not contribute to \(V_\infty\) and can be ignored. 
This yields
\(V_\infty(x)
=
\sum_{i=1}^n
\lambda_i\left((g_i(x))_+(1-d_i x_i)+(g_i(x))_-\,d_i x_i\right)\).
The alternative form follows from \(g_i = (g_i)_+ - (g_i)_-\).

By the explicit formula, \(V_\infty\) is a finite sum of terms of the form \(x\mapsto \lambda_i(g_i(x))_+\) and \(x\mapsto -\lambda_i d_i g_i(x)x_i\). Since \(g_i\) is affine, \(r\mapsto r_+\) is Lipschitz, and \(\X\) is compact, \(V_\infty\) is globally Lipschitz on \(\X\). Nonnegativity is immediate from the explicit formula, as \(d_ix_i \in [0, 1]\).

Finally, \(V_\infty(x)=0\) if and only if, for every \(i\), \((g_i(x))_+(1-d_i x_i)=0\) and \((g_i(x))_-\,d_i x_i=0\) since both are nonnegative.
Equivalently, if \(g_i(x)>0\), then \(d_i x_i=1\); if \(g_i(x)<0\), then \(x_i=0\); and if \(0<d_i x_i<1\), then \(g_i(x)=0\). 
By Proposition~\ref{prop:HSS_equilibria}, this is exactly the equilibrium condition. 
Under \(A \in \mc LDS\), since by Proposition~\ref{prop:common_equilibria} the LTN equilibrium is unique, the last claim follows.
\end{proof}

We now show that the Lyapunov function~\eqref{eq:Vslow_def} is nonincreasing for almost all slow time, along trajectories of~\eqref{eq:HSS}.
\begin{theorem}\longthmtitle{Strong GAS of the slow-selector limit under LDS}
\label{thm:slow_GAS}
Assume \(A\in\mc LDS\). Then the unique equilibrium \(x^\star\) of~\eqref{eq:HSS} is strongly globally asymptotically stable. More precisely, for \(d_{\min}:=\min_{i\in[n]} d_i>0\), for all \(0\le s_1\le s_2\),
\begin{equation}\label{eq:slowLyapunov-decay}
V_\infty(x(s_2))\le \mathrm{e}^{-d_{\min}(s_2-s_1)}V_\infty(x(s_1)).
\end{equation}
\end{theorem}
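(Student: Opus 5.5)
The plan is to differentiate $V_\infty$ along an arbitrary Filippov solution $s\mapsto x(s)$ of~\eqref{eq:HSS} and show that $\frac{d}{ds}V_\infty(x(s))\le -d_{\min}V_\infty(x(s))$ for a.e.\ $s$. Gr\"onwall's inequality then gives~\eqref{eq:slowLyapunov-decay}. Because nothing is assumed about which solution is chosen, the bound holds for every solution, which is what \emph{strong} requires.

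\textbf{Reduction to a derivative computation.} By Proposition~\ref{prop:slow_existence_invariance}, $x(s)\in\X$ for all $s$, and by Proposition~\ref{prop:Vslow_properties}(ii), $V_\infty$ is Lipschitz on $\X$. Hence $s\mapsto V_\infty(x(s))$ is absolutely continuous. Fix a time $s$ at which all of the following hold simultaneously (the set of such $s$ has full measure):
\begin{enumerate}[(a)]
\item $x'(s)$ exists and $x'(s)=\sigma-Dx(s)$ for some $\sigma\in\mc H(g(x(s)))$, where $g=Ax+u$;
\item $V_\infty(x(s))$ is differentiable in $s$;
\item for each $i$, if $s$ lies in $\{r: g_i(x(r))=0\}$, then $s$ is a density point of that set, so that $(Ax'(s))_i=\frac{d}{ds}g_i(x(s))=0$.
\end{enumerate}
Property (c) is the standard fact that the derivative of an absolutely continuous function vanishes a.e.\ on its level sets.

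\textbf{Computing the derivative.} $V_\infty$ is a finite maximum of the smooth functions $\phi_\zeta(x)=g(x)^\top\Lambda(\zeta-Dx)$. The derivative at $s$ therefore equals the Danskin directional derivative
\[
\max_{\zeta\ \text{active}}\ \big[(Ax')^\top\Lambda(\zeta-Dx)-g^\top\Lambda Dx'\big].
\]
Here the active $\zeta$ satisfy $\zeta_i=1$ when $g_i>0$, $\zeta_i=0$ when $g_i<0$, and $\zeta_i$ is arbitrary when $g_i=0$. Off the zero set $I=\{i:g_i=0\}$, every active $\zeta$ has $\zeta_i=\sigma_i$. On $I$, (c) gives $(Ax')_i=0$, so the value of $\zeta_i$ is irrelevant. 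The first term therefore equals $(Ax')^\top\Lambda(\sigma-Dx)=x'^\top A^\top\Lambda x'=\tfrac12 x'^\top(A^\top\Lambda+\Lambda A)x'\le0$ by LDS.

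For the second term, the same argument gives $V_\infty(x)=g^\top\Lambda(\sigma-Dx)=\sum_i\lambda_i g_i(\sigma_i-d_ix_i)$. Each summand is nonnegative: when $g_i>0$ we have $\sigma_i=1\ge d_ix_i$, when $g_i<0$ we have $\sigma_i=0\le d_ix_i$, and the summand vanishes otherwise. Consequently
\[
-g^\top\Lambda Dx'=-\sum_i d_i\lambda_i g_i(\sigma_i-d_ix_i)\le -d_{\min}V_\infty(x).
\]
Combining the two terms yields $V_\infty' \le -d_{\min}V_\infty$ a.e., and Gr\"onwall gives~\eqref{eq:slowLyapunov-decay}.

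\textbf{From decay of $V_\infty$ to GAS.} Since $V_\infty$ is continuous, nonnegative, and vanishes only at $x^\star$ on the compact set $\X$ (Proposition~\ref{prop:Vslow_properties}(iii)), a compactness argument yields class-$\mc K$ bounds $\alpha(\|x-x^\star\|)\le V_\infty(x)\le\beta(\|x-x^\star\|)$. Together with~\eqref{eq:slowLyapunov-decay}, this gives Lyapunov stability and convergence $x(s)\to x^\star$ for every solution.

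\textbf{Main obstacle.} The delicate step is justifying that the max-type derivative collapses onto the single selection $\zeta=\sigma$ on the switching set $I$. This is where (c) is essential, and where one must check that the a.e.\ derivative of $V_\infty\circ x$ coincides with the one-sided Danskin derivative. If the Danskin route proves awkward, I would use the explicit formula of Proposition~\ref{prop:Vslow_properties}(i) instead. The only nonsmooth pieces there are $(g_i(x(s)))_+$, whose a.e.\ derivative is $\mathbf 1_{g_i>0}\,(Ax')_i$, with the value on $\{g_i=0\}$ again killed by (c).
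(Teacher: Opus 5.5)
Your proposal is correct and follows essentially the same route as the paper. The paper differentiates the explicit formula of Proposition~\ref{prop:Vslow_properties}(i), which is your stated fallback, and uses the same fact that $g_i'=0$ a.e.\ on $\{g_i=0\}$ to get $V_\infty'=x'^\top\Lambda A x'-\sum_i d_i\lambda_i g_i x_i'$, identifies $\sum_i\lambda_i g_i x_i'=V_\infty$ with nonnegative summands to obtain the $d_{\min}$ bound, and concludes GAS by the same positive-definiteness/compactness argument; your Danskin computation with the active selection $\zeta=\mathbf 1_{\{g>0\}}$ yields exactly this expression, so the difference is only cosmetic.
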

\vspace{11pt}
\begin{proof}
Let \(x:[0,\infty) \rightarrow \X\) be any absolutely continuous solution of~\eqref{eq:HSS} satisfying it a.e. \(s\). 
Since \(V_\infty\) is Lipschitz (Proposition~\ref{prop:Vslow_properties}), \(s\mapsto V_\infty(x(s))\) is a.e. differentiable \cite[Proposition 10]{JC:08}.

Let \(g_i(s):= A_i^\top x(s)+u_i\).
To calculate the Lyapunov derivative, we first compute \(g_i'(s)\).
Since the solution trajectory \(x\) satisfies~\eqref{eq:HSS} for a.e.\ \(s\), for each \(i\) and for a.e.\ \(s\), \(g_i(s)>0 \Longrightarrow x_i'(s)=1-d_i x_i(s)\) and \(g_i(s)<0 \Longrightarrow x_i'(s)=-d_i x_i(s)\).
Also, since \(g_i\) is absolutely continuous, one has \(g_i'(s)=0\) for a.e.\ \(s\) such that \(g_i(s) = 0\). Differentiating \(V_\infty(x(s))\) for a.e.\ \(s\) gives
\(V'_\infty(x(s)) = \sum_{i=1}^n \lambda_i p_i(s)\); 
\[p_i(s) = \left(\mathbf 1_{\{g_i(s)>0\}}-d_i x_i(s)\right)g_i'(s)-d_i g_i(s) x'_i(s).\]
We observe that \((\mathbf 1_{\{g_i(s)>0\}}-d_i x_i(s)) g_i'(s) = x'_i(s) g_i'(s)\) for a.e. s, as, if \(g_i>0\) \(x'_i=1-d_i x_i\) and if \(g_i<0\) \(x'_i=-d_i x_i\) and when \(g_i=0\), both sides vanish a.e.
Using \(g_i'=A_i^\top x'\),
\begin{align*}
    V'_\infty(x(s))
    &=
    \sum_{i=1}^n \lambda_i\bigl(x'_i(s) g_i'(s)-d_i g_i(s) x'_i(s)\bigr)\\
    &=
    x'(s)^\top\Lambda A\, x'(s)
    -
    \sum_{i=1}^n d_i\lambda_i g_i(s)x'_i(s)
\end{align*}
for a.e.\ \(s\).
We now tackle the second term. If \(g_i(s)>0\), then \(g_i(s)x'_i(s)=g_i(s)(1-d_i x_i(s))=(g_i(s))_+(1-d_i x_i(s))\), which is nonnegative. 
If \(g_i(s)<0\), then \(g_i(s) x'_i(s)=g_i(s)(-d_i x_i(s))=(g_i(s))_-d_i x_i(s)\) which is also nonnegative. 
If \(g_i(s)=0\), both sides vanish. 
Therefore, comparing with the explicit form in Proposition~\ref{prop:Vslow_properties} (i), \(\sum_{i=1}^n \lambda_i g_i(s) x'_i(s)=V_\infty(x(s))\) for a.e. \(s\).
Since each summand is nonnegative,
\[
\sum_{i=1}^n d_i\lambda_i g_i(s) x'_i(s)
\ge
d_{\min}\sum_{i=1}^n \lambda_i g_i(s) x'_i(s)
=
d_{\min}V_\infty(x(s)).
\]
Hence \(V'_\infty(x(s)) \le x'(s)^\top\Lambda A\, x'(s)-d_{\min}V_\infty(x(s))\) for a.e. \(s\).
Under the LDS hypothesis, by~\eqref{eq:LDS_mu_fast} and setting \(z = x'(s)\),
\[
x'(s)^\top\Lambda A\, x'(s)
\le -\mu \lVert x'(s)\rVert_\Lambda^2,
\]
\[ \implies
V'_\infty(x(s))
\le
-\mu \lVert x'(s)\rVert_\Lambda^2-d_{\min}V_\infty(x(s))
\quad\text{for a.e.\ }s.
\]

Dropping the nonnegative term \(\mu\lVert x'\rVert_\Lambda^2\) and multiplying by \(\mathrm{e}^{d_{\min} s}\) gives
\(\left(\mathrm{e}^{d_{\min} s}V_\infty(x(s))\right)'\le 0\) for a.e.\ \(s\).
\[\implies V_\infty(x(s_2))\le \mathrm{e}^{-d_{\min}(s_2-s_1)}V_\infty(x(s_1)).\]
This gives us \(V_\infty(x(s))\to0\) as \(s\to\infty\).
Since \(V_\infty\) is continuous on the compact set \(\X\) and, by Proposition~\ref{prop:Vslow_properties} (iii), is positive definite with respect to the unique equilibrium \(x^*\) under the LDS condition, for each \(\varepsilon>0\) it holds that
\[m_\varepsilon := \min\{V_\infty(x):x\in\X,\ \lVert x-x^\star\rVert\ge\varepsilon\} > 0.\]

Thus, as \(V_\infty\) vanishes, so must \(\lVert x-x^\star\rVert\), and \(x(s)\to x^\star\).
It remains to prove Lyapunov stability. Fix \(\varepsilon>0\) and thus \(m_\varepsilon\).
By continuity of \(V_\infty\) at \(x^\star\), there exists \(\delta>0\) such that \(\lVert x-x^\star\rVert<\delta\) so that
\(V_\infty(x)<m_\varepsilon\).
If \(\lVert x(0)-x^\star\rVert<\delta\), then \(V_\infty(x(0))<m_\varepsilon\), and~\eqref{eq:slowLyapunov-decay} implies that, for all \(s\ge0\),
\[V_\infty(x(s))\le V_\infty(x(0))<m_\varepsilon\]
Hence \(\lVert x(s)-x^\star\rVert<\varepsilon\) for all \(s\ge0\). 
Therefore \(x^\star\) is Lyapunov stable. 
Combined with convergence, this proves strong global asymptotic stability.
\end{proof}

We have established that the slow endpoint also possesses a globally asymptotically stable equilibrium, and have identified the Lyapunov function that establishes this.

\section{Discussion}

Having established the global stability properties of the two endpoints of the $\tau$-LTN family, we note that this does not yet prove global asymptotic stability for every intermediate member, and therefore do not resolve Conjecture~\ref{conj:LDS-GAS}. 
Theorems~\ref{thm:PDS_GES} and~\ref{thm:slow_GAS} do, however, isolate two stability mechanisms that any technical argument for the whole family must reconcile: contraction of the underlying linear drift in a diagonal metric, and coherent switching induced by saturation. 
In this sense, the $\tau$-LTN family is not merely a homotopy between models, but it distills the LTN dynamics into two isolated and analytically tractable motifs whose interplay appears central to LTN global stability under LDS.
The constructed family allows us to view LTN dynamics as a combination of structured linear dissipation and a combinatorial stable switching under the LDS hypothesis.

This viewpoint also suggests a useful interpretation of $\tau$ as a boundary-sensitivity parameter. 
When $\tau$ is small, trajectories behave as if they are relatively insensitive or myopic to the boundary of $\X$: they move according to the linear drift until projection becomes necessary.
When $\tau$ is large, the dynamics respond to the impending saturation much earlier, effectively turning according to the sign pattern of $Ax+u$ before the full affine drift is realized. 
The canonical LTN at $\tau=1$ sits between these two extremes.%
We perform simulations to show the result of varying \(\tau\).
Figure~\ref{fig:trajectories} displays a Lyapunov descent study where we use either endpoint certificate to attempt certifying stability numerically.
The simulations demonstrate the suitability of the Lyapunov functions for the endpoints' trajectories.

We also performed numerical experiments outside the LDS regime, considering the 2-dimensional case of limit cycle emergence, with parameters chosen following~\cite{EN-RP-JC:22}. 
The results are presented in Figure~\ref{fig:limit_cycle}.
Surprisingly, the \(\tau\)-LTN family was empirically observed to retain the oscillatory behavior throughout the family and at the endpoints as well.
This gives us further evidence that the \(\tau\)-LTN family exhibits shared dynamical behavior across all elements of the family.

\begin{figure}
    \centering
    \includegraphics[width=1\linewidth]{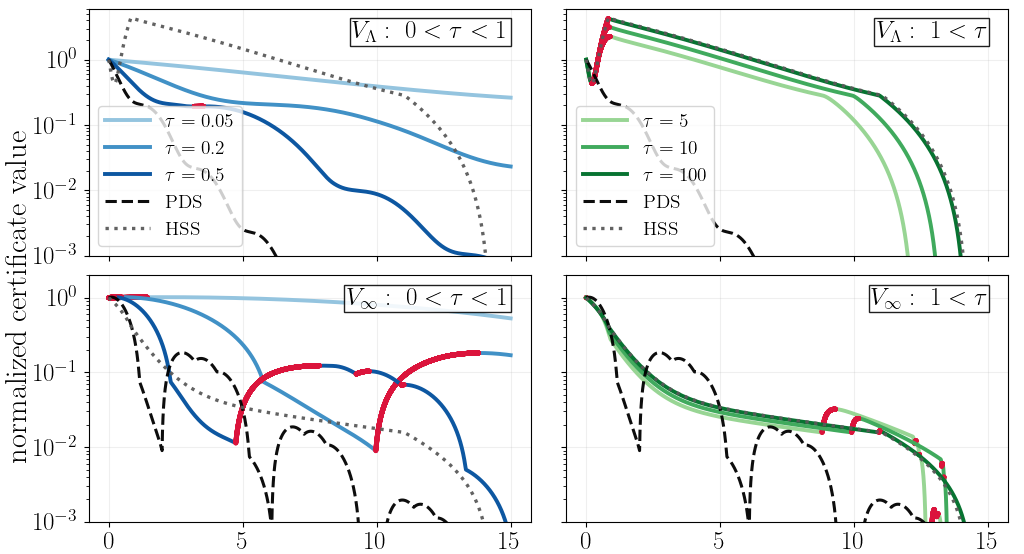}
    \caption{
    Lyapunov certificate descent study under Lyapunov diagonal stability using the two discovered Lyapunov functions \(V_\Lambda\) and \(V_\infty\) with trajectories of different members of the \(\tau\)-LTN family and the endpoints.
    Here, \(W=[0.0,\, -1.6;\, 1.6,\, 0.0]\), \(D = \diag(1.0,\,0.1)\), and \(u = [0.5,\, -0.2]^\top\)
    Empirically shows the expected fact that members close to each limit should be compatible with the corresponding limit's certificate.
    Red demarcations indicate Lyapunov descent violation, diagonal plots show decreasingly common violations as \(\tau\) gets closer to the respective limit.
    The anti-diagonal plots show increasingly severe violations, as trajectories are more aligned to the respective other endpoint's dynamics.
    ``Normalized" here represents the Lyapunov evolution along trajectories as a fraction of its initial value.
    The HSS descent stops at a set tolerance due to numerical issues.
    }
    \label{fig:trajectories}
\end{figure}

\begin{figure*}
    \centering
    \includegraphics[width=0.9\textwidth]{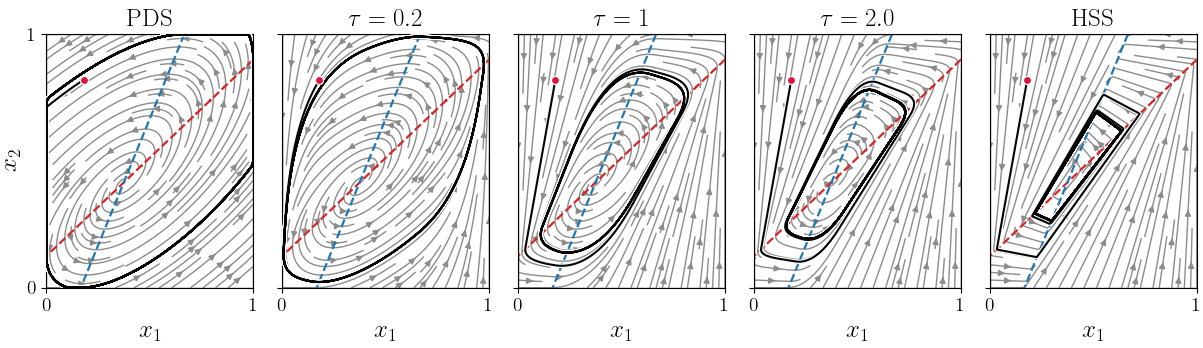}
    \caption{Empirical simulation evidence that oscillatory behavior present in a non-Lyapunov diagonally stable example is reflected in both endpoint regimes of the $\tau$-linear-threshold network ($\tau$-LTN) family, and in the endpoints themselves. 
    Middle three panels: members of the \(\tau\)-LTN family for 
    \(\tau \in \{0.2,\, 1,\,2.0\}\); 
    extreme panels: the degenerate projected dynamical system (PDS) and hard-selector system (HSS) limits.
    This qualitatively supports the claim that the family preserves a common dynamical skeleton across timescales and in the degenerate limit.
    Red dashed line: locus of \((Ax + u)_1 = 0\); the blue one of \((Ax + u)_2 = 0\).
    Here, \(W = [4.1,\, -4.0;\, 3.0,\, -0.5]\), \(D = \diag(1,\, 1)\), and \(u = [0.5,\, -0.5]\).
    Parameters are chosen following~\cite[Theorem 3.1]{EN-RP-JC:22}.}
    \label{fig:limit_cycle}
\end{figure*}

\section{Conclusions}

We introduced the \(\tau\)-LTN family as a one-parameter deformation of the canonical linear-threshold network that preserves both Lyapunov diagonal stability (LDS) and the equilibrium set. The \(\tau\)-LTN family reveals two analytically tractable limiting regimes. In the fast limit, the dynamics converge to a projected dynamical system for which LDS ensures global exponential stability. In the slow limit, the dynamics converge to a hard-selector differential inclusion for which LDS guarantees global asymptotic stability. These limits expose the fundamental dissipative and switching mechanisms governing LTN behavior.

Future work will focus on establishing global asymptotic stability for the full \(\tau\)-LTN family. Viable approaches are either a Lyapunov-based proof requiring an interpolation of the endpoint Lyapunov functions or continuation arguments. The \(\tau\)-family also exhibits striking regularity-preservation of LDS and persistence of qualitative behaviors, which merits further analytical investigation. These directions may extend to broader activation functions and contribute to a more complete theory of biologically plausible nonlinear networks.

\appendices
\counterwithin{theorem}{section}
\renewcommand{\thetheorem}{A\thesection.\arabic{theorem}}

\section{Linear-threshold network well-posedness}
\begin{lemma}[LTN forward invariance]
\label{lem:forward_invariance}
The set \(\X\) is forward invariant under the dynamics~\eqref{eq:LTN}, that is, for each \(x(0) \in \X\), the solution of~\eqref{eq:LTN} remains in \(\X\) for all time.
\end{lemma}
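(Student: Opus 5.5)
The argument works coordinate by coordinate and uses only that the saturation \([\cdot]_0^1\) takes values in \([0,1]\). The vector field \(f_1(x)=-Dx+[Wx+u]_0^1\) is globally Lipschitz, since \(z\mapsto[z]_0^1\) is \(1\)-Lipschitz. So for each \(x(0)\in\X\) there is a unique \(C^1\) solution defined for all \(t\ge 0\), and this solution is what we will show stays in \(\X\).

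Fix \(i\in[n]\) and write \(y_i(t):=d_ix_i(t)\). Let \(\sigma_i(t):=([Wx(t)+u]_0^1)_i\in[0,1]\); this is a continuous function of \(t\) once the trajectory is fixed. Then \(\dot y_i=d_i(\sigma_i(t)-y_i)\). By variation of constants,
\begin{equation*}
y_i(t)=\mathrm{e}^{-d_it}y_i(0)+\int_0^t d_i\mathrm{e}^{-d_i(t-s)}\sigma_i(s)\,ds .
\end{equation*}
The weights \(\mathrm{e}^{-d_it}\) and \(d_i\mathrm{e}^{-d_i(t-s)}\,ds\) are nonnegative and integrate to \(\mathrm{e}^{-d_it}+(1-\mathrm{e}^{-d_it})=1\). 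Hence \(y_i(t)\) is a convex combination of \(y_i(0)\in[0,1]\) and values \(\sigma_i(s)\in[0,1]\), so \(y_i(t)\in[0,1]\) for all \(t\ge0\). Since this holds for every \(i\), we get \(Dx(t)\in[0,1]^n\), i.e.\ \(x(t)\in\X\).

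As an alternative we could use a Nagumo-type boundary check. If \(d_ix_i=0\), then \(\dot x_i=\sigma_i\ge0\); if \(d_ix_i=1\), then \(\dot x_i=\sigma_i-1/d_i\cdot d_i\le 0\). We would then combine this with a comparison argument, using that \(\X\) is a box.

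There is no real obstacle; the only care needed is that \(\sigma_i\) depends on the full state. The explicit formula sidesteps this, because once the trajectory is fixed \(\sigma_i\) is just a known continuous input with values in \([0,1]\).
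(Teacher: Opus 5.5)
Your proof is correct, but your main argument takes a different route from the paper's. The paper gets existence and uniqueness from Lipschitz continuity. It then does exactly the boundary sign check you list as your alternative: \(\dot x_i\ge 0\) when \(d_ix_i=0\) and \(\dot x_i\le 0\) when \(d_ix_i=1\), and it concludes directly that no coordinate can leave \([0,1/d_i]\).

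That route is short and geometric, and it extends to any vector field that points inward on the faces of the box. However, because the inequalities are non-strict, it is really an implicit appeal to Nagumo's theorem, which relies on uniqueness; the paper does not state this step.

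Your variation-of-constants argument needs no invariance theorem. It uses the fact that the leak is linear and diagonal while the nonlinearity enters only as a \([0,1]\)-valued input. Each \(d_ix_i\) is therefore a low-pass filtered copy of that input, and hence a convex combination of \(d_ix_i(0)\) and values in \([0,1]\).

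This buys you more than the paper's argument in two ways:
\begin{enumerate}[(i)]
\item For initial conditions outside \(\X\), the same formula gives \(\mathrm{dist}(d_ix_i(t),[0,1])\le \mathrm{e}^{-d_it}\,\mathrm{dist}(d_ix_i(0),[0,1])\). So \(\X\) is globally exponentially attracting, not just invariant.
\item The argument only needs \(\sigma_i\) to be measurable with values in \([0,1]\) almost everywhere. It therefore applies verbatim to every member of the \(\tau\)-LTN family and to absolutely continuous Filippov solutions of the hard-selector inclusion. In that setting uniqueness fails, so the paper's boundary-check justification of strong forward invariance for the slow limit would otherwise need more support.
\end{enumerate}

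Your remark that the vector field is globally Lipschitz, not merely locally Lipschitz as the paper states, is also the correct one to invoke for existence on all of \([0,\infty)\).
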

\begin{proof}
Since the vector field of~\eqref{eq:LTN} is locally Lipschitz (\(-Dx\) being linear, and \([Wx + u]_0^1\) being the composition of a linear function with a \(1\)-Lipschitz nonlinearity), existence and uniqueness of solution \(x:[0, \infty) \rightarrow \R^n\) starting from any \(x(0) \in \X\) is guaranteed, cf.~\cite{HK:02}.
For the \(i\)th coordinate, \( \dot x_i= -d_ix_i+\left([Wx+u]_0^1\right)_i\) with each \(x_i \in [0,1/d_i]\). If \(d_ix_i=0\), then \( \dot x_i= \left([Wx+u]_0^1\right)_i\ge0\). 
If \(d_ix_i=1\), then \( \dot x_i= -1+\left([Wx+u]_0^1\right)_i\le0\). Hence no coordinate \(x_i\) can leave \([0,1/d_i]\), so trajectories starting in \(\X\) remain in \(\X\).
\end{proof}

\smallskip
\bibliographystyle{IEEEtran}%
\bibliography{ref, SB-main}

\begin{thebibliography}{10}
\providecommand{\url}[1]{#1}
\csname url@samestyle\endcsname
\providecommand{\newblock}{\relax}
\providecommand{\bibinfo}[2]{#2}
\providecommand{\BIBentrySTDinterwordspacing}{\spaceskip=0pt\relax}
\providecommand{\BIBentryALTinterwordstretchfactor}{4}
\providecommand{\BIBentryALTinterwordspacing}{\spaceskip=\fontdimen2\font plus
\BIBentryALTinterwordstretchfactor\fontdimen3\font minus \fontdimen4\font\relax}
\providecommand{\BIBforeignlanguage}[2]{{%
\expandafter\ifx\csname l@#1\endcsname\relax
\typeout{** WARNING: IEEEtran.bst: No hyphenation pattern has been}%
\typeout{** loaded for the language `#1'. Using the pattern for}%
\typeout{** the default language instead.}%
\else
\language=\csname l@#1\endcsname
\fi
#2}}
\providecommand{\BIBdecl}{\relax}
\BIBdecl

\bibitem{CMA-GS:83}
M.~A. Cohen and S.~Grossberg, ``Absolute stability of global pattern formation and parallel memory storage by competitive neural networks,'' \emph{IEEE Transactions on Systems, Man, and Cybernetics}, vol. SMC-13, no.~5, p. 815–826, Sep. 1983.

\bibitem{HJJ:82}
J.~J. Hopfield, ``Neural networks and physical systems with emergent collective computational abilities.'' \emph{Proceedings of the National Academy of Sciences}, vol.~79, no.~8, p. 2554–2558, Apr. 1982.

\bibitem{HJJ:84}
------, ``Neurons with graded response have collective computational properties like those of two-state neurons.'' \emph{Proceedings of the National Academy of Sciences}, vol.~81, no.~10, p. 3088–3092, May 1984.

\bibitem{EN-JC:21}
\BIBentryALTinterwordspacing
E.~Nozari and J.~Cort\'es, ``Hierarchical selective recruitment in linear-threshold brain networks—{P}art {I}: Single-layer dynamics and selective inhibition,'' \emph{IEEE Transactions on Automatic Control}, vol.~66, no.~3, p. 949–964, Mar. 2021. [Online]. Available: \url{http://dx.doi.org/10.1109/TAC.2020.3004801}
\BIBentrySTDinterwordspacing

\bibitem{EN-RP-JC:22}
\BIBentryALTinterwordspacing
E.~Nozari, R.~Planas, and J.~Cortés, ``Structural characterization of oscillations in brain networks with rate dynamics,'' \emph{Automatica}, vol. 146, p. 110653, Dec. 2022. [Online]. Available: \url{http://dx.doi.org/10.1016/j.automatica.2022.110653}
\BIBentrySTDinterwordspacing

\bibitem{KM-AD-VI-CC:24}
\BIBentryALTinterwordspacing
K.~Morrison, A.~Degeratu, V.~Itskov, and C.~Curto, ``Diversity of emergent dynamics in competitive threshold-linear networks,'' \emph{SIAM Journal on Applied Dynamical Systems}, vol.~23, no.~1, p. 855–884, Mar. 2024. [Online]. Available: \url{http://dx.doi.org/10.1137/22M1541666}
\BIBentrySTDinterwordspacing

\bibitem{EK-AB:00}
\BIBentryALTinterwordspacing
E.~Kaszkurewicz and A.~Bhaya, \emph{\BIBforeignlanguage{en}{Matrix {Diagonal} {Stability} in {Systems} and {Computation}}}.\hskip 1em plus 0.5em minus 0.4em\relax Boston, MA: Birkhäuser Boston, 2000. [Online]. Available: \url{http://link.springer.com/10.1007/978-1-4612-1346-8}
\BIBentrySTDinterwordspacing

\bibitem{MA-CM-AP:16}
\BIBentryALTinterwordspacing
M.~Arcak, C.~Meissen, and A.~Packard, \emph{\BIBforeignlanguage{en}{Networks of {Dissipative} {Systems}}}, ser. {SpringerBriefs} in {Electrical} and {Computer} {Engineering}.\hskip 1em plus 0.5em minus 0.4em\relax Cham: Springer International Publishing, 2016. [Online]. Available: \url{http://link.springer.com/10.1007/978-3-319-29928-0}
\BIBentrySTDinterwordspacing

\bibitem{FM-TA:95}
M.~Forti and A.~Tesi, ``New conditions for global stability of neural networks with application to linear and quadratic programming problems,'' \emph{IEEE Transactions on Circuits and Systems I: Fundamental Theory and Applications}, vol.~42, no.~7, p. 354–366, Jul. 1995.

\bibitem{BS-RW-DA:26}
\BIBentryALTinterwordspacing
S.~Betteti, W.~Retnaraj, A.~Davydov, J.~Cortés, and F.~Bullo, 2025. [Online]. Available: \url{https://arxiv.org/abs/2512.05252}
\BIBentrySTDinterwordspacing

\bibitem{PD-AN:93}
\BIBentryALTinterwordspacing
P.~Dupuis and A.~Nagurney, ``Dynamical systems and variational inequalities,'' \emph{Annals of Operations Research}, vol.~44, no.~1, p. 7–42, Feb. 1993. [Online]. Available: \url{http://dx.doi.org/10.1007/BF02073589}
\BIBentrySTDinterwordspacing

\bibitem{AN-DZ:96}
A.~Nagurney and D.~Zhang, \emph{\BIBforeignlanguage{en}{Projected dynamical systems and variational inequalities with applications}}, 1996th~ed., ser. International Series in Operations Research \& Management Science.\hskip 1em plus 0.5em minus 0.4em\relax Dordrecht, Netherlands: Springer, Dec. 1995.

\bibitem{JC:08}
J.~Cort\'es, ``Discontinuous dynamical systems,'' \emph{IEEE Control Systems Magazine}, vol.~28, no.~3, pp. 36--73, 2008.

\bibitem{AFF:88}
\BIBentryALTinterwordspacing
A.~F. Filippov, \emph{Differential Equations with Discontinuous Righthand Sides}, F.~M. Arscott, Ed.\hskip 1em plus 0.5em minus 0.4em\relax Springer Netherlands, 1988. [Online]. Available: \url{http://dx.doi.org/10.1007/978-94-015-7793-9}
\BIBentrySTDinterwordspacing

\bibitem{REK:63}
\BIBentryALTinterwordspacing
R.~E. Kalman, ``Lyapunov functions for the problem of lur’e in automatic control,'' \emph{Proceedings of the National Academy of Sciences}, vol.~49, no.~2, p. 201–205, Feb. 1963. [Online]. Available: \url{http://dx.doi.org/10.1073/pnas.49.2.201}
\BIBentrySTDinterwordspacing

\bibitem{brouwer1911abbildung}
L.~E.~J. Brouwer, ``{\"U}ber abbildung von mannigfaltigkeiten,'' \emph{Mathematische annalen}, vol.~71, no.~1, pp. 97--115, 1911.

\bibitem{HK:02}
H.~Khalil, \emph{Nonlinear Systems}, ser. Pearson Education.\hskip 1em plus 0.5em minus 0.4em\relax Prentice Hall, 2002.

\bibitem{DZ-AN:95}
\BIBentryALTinterwordspacing
D.~Zhang and A.~Nagurney, ``On the stability of projected dynamical systems,'' \emph{Journal of Optimization Theory and Applications}, vol.~85, no.~1, p. 97–124, Apr. 1995. [Online]. Available: \url{http://dx.doi.org/10.1007/BF02192301}
\BIBentrySTDinterwordspacing

\end{thebibliography}

\end{document}